\let\cl@part\relax  
\newenvironment{proof}{\par\noindent\textbf{Proof.}\ }{\hfill$\square$\par}
\newaliascnt{lemma}{theorem}
\newtheorem{lemma}[lemma]{Lemma}
\newaliascnt{proposition}{theorem}
\newtheorem{proposition}[proposition]{Proposition}
\newaliascnt{corollary}{theorem}
\newaliascnt{conjecture}{theorem}
\newaliascnt{definition}{theorem}
\newtheorem{definition}[definition]{Definition}
\newaliascnt{remark}{theorem}
\newtheorem{remark}[remark]{Remark}
\newaliascnt{assumption}{theorem}
\newtheorem{assumption}[assumption]{Assumption}
\newaliascnt{hypothesis}{theorem}
\newaliascnt{property}{theorem}
\Crefname{figure}{Fig.}{Figs.}
\crefname{lemma}{Lemma}{Lemmas}
\Crefname{lemma}{Lemma}{Lemmas}
\crefname{proposition}{Proposition}{Propositions}
\Crefname{proposition}{Proposition}{Propositions}
\crefname{corollary}{Corollary}{Corollaries}
\Crefname{corollary}{Corollary}{Corollaries}
\crefname{definition}{Definition}{Definitions}
\Crefname{definition}{Definition}{Definitions}
\crefname{remark}{Remark}{Remarks}
\Crefname{remark}{Remark}{Remarks}
\crefname{assumption}{Assumption}{Assumptions}
\Crefname{assumption}{Assumption}{Assumptions}
\crefname{conjecture}{Conjecture}{Conjectures}
\Crefname{conjecture}{Conjecture}{Conjectures}
\crefname{hypothesis}{Hypothesis}{Hypotheses}
\Crefname{hypothesis}{Hypothesis}{Hypotheses}
\crefname{property}{Property}{Properties}
\Crefname{property}{Property}{Properties}
\providecommand{\runauthor}[1]{}%
\providecommand{\runtitle}[1]{}%
\providecommand{\thanksref}[1]{}%
\begin{document}

\title{Disturbance Attenuation Regulator I-B: Signal Bound Convergence and Steady-State\thanks{The authors gratefully acknowledge the financial support of the National Science Foundation (NSF) under Grant Nos. 2027091 and 2138985. The authors thank Moritz Diehl for helpful discussions.}}

\author{Davide Mannini\thanks{Department of Chemical Engineering, University of California, Santa Barbara. Email: dmannini@ucsb.edu} \and James B. Rawlings\thanks{Department of Chemical Engineering, University of California, Santa Barbara. Email: jbraw@ucsb.edu}}

\maketitle

\begin{abstract}
This paper establishes convergence and steady-state properties for the signal bound disturbance attenuation regulator (SiDAR). Building on the finite horizon recursive solution developed in a companion paper, we introduce the steady-state SiDAR and derive its tractable linear matrix inequality (LMI) with $O(n^3)$ complexity. Systems are classified as degenerate or nondegenerate based on steady-state solution properties. For nondegenerate systems, the finite horizon solution converges to the steady-state solution for all states as the horizon approaches infinity. For degenerate systems, convergence holds in one region of the state space, while a turnpike arises in the complementary region. When convergence holds, the optimal multiplier and control gain are obtained directly from the LMI solution. Numerical examples illustrate convergence behavior and turnpike phenomena.

Companion papers address the finite horizon SiDAR solution and the stage bound disturbance attenuation regulator (StDAR).
\end{abstract}

\section{Introduction}
\label{sec:intro}

The disturbance attenuation regulator (DAR) is a deterministic minmax feedback control for linear systems that attenuates the effect of exogenous bounded disturbances on a quadratic cost. The DAR is studied in two forms: the signal bound disturbance attenuation regulator (SiDAR), in which the disturbance signal two-norm is bounded over the entire horizon, and the stage bound disturbance attenuation regulator (StDAR), in which the disturbance is bounded independently at each stage. The term regulator denotes feedback regulation of the origin in the presence of disturbances.

In a companion paper \citep{mannini:rawlings:2026a}, we derived a finite horizon recursive solution to the SiDAR for arbitrary initial states, establishing that the optimal control policy is nonlinear in the state and partitions the state space into linear and nonlinear solution regions.

This paper extends those results to the infinite horizon setting. While the finite horizon solution is valid for any fixed horizon $N$, practical implementation requires understanding the limiting behavior as $N \to \infty$. Three fundamental questions arise: (i) does the finite horizon solution converge? (ii) what is the limiting solution? (iii) can the infinite horizon problem be solved efficiently?

We answer these questions by introducing the steady-state SiDAR, a fixed-point problem capturing the limiting behavior of the Riccati recursion. The steady-state problem admits a tractable LMI representation with $O(n^3)$ complexity; the optimal multiplier and control gain are then obtained directly from the LMI solution. For the zero initial state case, this LMI recovers the standard $H_\infty$ state feedback control problem.

The convergence analysis reveals a fundamental dichotomy. We classify systems as \emph{nondegenerate} or \emph{degenerate} based on whether the steady-state constraint $\lambda = \norm{G'\Pi G}$ is active at the origin. For nondegenerate systems, the finite horizon solution converges to the steady-state uniformly over the entire state space. For degenerate systems, convergence holds only in a subset of the state space; in the complementary region, the Riccati recursion exhibits turnpike behavior.

The main contributions are:
\begin{itemize}
\item introducing the steady-state SiDAR and deriving its LMI with $O(n^3)$ complexity
\item classifying systems as degenerate or nondegenerate based on steady-state properties
\item establishing convergence of the finite horizon solution to the steady-state as the horizon approaches infinity
\item investigating turnpike behavior for degenerate systems
\end{itemize}

The paper is organized as follows. \Cref{sec:prelim} summarizes the finite horizon SiDAR from \citep{mannini:rawlings:2026a}. \Cref{sec:ss} introduces the steady-state SiDAR and derives its LMI. \Cref{sec:inf} establishes convergence properties for degenerate and nondegenerate systems. \Cref{sec:num-example} illustrates the theory with numerical examples, and \Cref{sec:end} summarizes the findings. The appendix compiles fundamental propositions.

Companion papers address the finite horizon SiDAR solution \citep{mannini:rawlings:2026a} and the stage bound disturbance attenuation regulator (StDAR) \citep{mannini:rawlings:2026c}. A discussion of prior literature on the SiDAR and its connections to $H_\infty$ and minmax control is given in \citep{mannini:rawlings:2026a}.

\textit{Notation:} Let $\bbR $ denote the reals and $\bbI$ the integers. $\mathbb{R}^{m \times n}$ denotes the space of $m \times n$ real matrices and $\mathbb{S}^n$ denotes the space of $n \times n$ real symmetric positive definite matrices.
The \(\norm{x}\) denotes the two-norm of vector \(x\); \(\norm{M}\) denotes the induced two-norm of matrix \(M\); \(\normf{M}_F\) denotes the Frobenius norm of matrix \(M\). For matrices $X, Y \in \mathbb{R}^{m \times n}$, the matrix inner product is $\langle X, Y \rangle \eqbyd \operatorname{Tr}(X'Y)$, and $\normf{M}_F = \sqrt{\langle M, M \rangle}$. For a symmetric matrix $A \in \mathbb{R}^{n \times n}$ with $A \succeq 0$, $A^{1/2}$ denotes the unique principal symmetric square root: $A^{1/2} \succeq 0$ and $(A^{1/2})^2 = A$. For $A \succ 0$, define $A^{-1/2} \eqbyd (A^{1/2})^{-1}$. For a symmetric matrix $\Gamma \succeq 0$, we may write $\Gamma = WW'$ where $W \eqbyd \Gamma^{1/2}$ denotes the principal square root unless stated otherwise; in general, such factorizations are not unique. For a vector $y \in \mathbb{R}^p$, let $\yseq$ denote a sequence $\yseq \eqbyd (y(0), y(1), \dots, y(N-1))$. The norm of a signal $\yseq$ is defined as $\smax{\yseq} \eqbyd ( \sum_{k=0}^{N-1} \norm{y(k)}^2 )^{1/2}$ for finite horizon and $\smax{\yseq} \eqbyd ( \sum_{k=0}^{\infty} \norm{y(k)}^2 )^{1/2}$ for infinite horizon; the one-norm of a sequence is defined as $\smax{\yseq}_1 \eqbyd \sum_{k=0}^{N-1} \norm{y(k)}$. The column space (range) and nullspace of a matrix $M$ are denoted by $\mathcal{R}(M)$ and $\mathcal{N}(M)$, respectively. The pseudoinverse of a matrix $M$ is denoted as $M^{\dagger}$. The determinant of a square matrix $M$ is denoted $\det M$, and the adjugate (classical adjoint) is denoted $\mathrm{adj}(M)$. For symmetric matrices $A$ and $B$, $A \succeq B$ denotes $A - B$ is positive semidefinite (the Loewner order); a minimal solution refers to the smallest solution in the Loewner order. For a square matrix $M$, $\eig(M)$ denotes the set of eigenvalues of $M$ and $\rho(M)\eqbyd\max\{|\mu| : \mu\in\eig(M)\}$ denotes the spectral radius.

\section{Problem Setup and Finite Horizon Preliminaries}
\label{sec:prelim}

We summarize the essential definitions and results from the companion paper \citep{mannini:rawlings:2026a} required for the convergence analysis.

\subsection{Problem Set Up}

Consider the discrete time system
\begin{equation}
    x^+ = Ax + Bu + Gw \label{system}
\end{equation}  
in which $x \in \bbR^n$ is the state, $u \in \bbR^m$ is the control, $w \in \bbW \subset \bbR^q$ is a disturbance, and $x^+\in \bbR^n$ is the successor state. The signal bound disturbance constraint set is
\[
	\bbW
	\eqbyd
	\Bigl\{
	\wseq
	 \mid
	\sum_{k=0}^{N-1}|w_k|^2 \leq \alpha
	\Bigr\}
    \] 
The objective function is
\begin{equation}
    V(x_0, \useq, \wseq) = \sum_{k=0}^{N-1} \ell(x_k, u_k)  + \ell_f(x_N) \label{maincost}
\end{equation}
where $\ell(x,u) = (1/2)x'Qx + (1/2)u'Ru$ and $\ell_f(x) = (1/2)x'P_fx$ with $Q \succeq 0$, $R \succ 0$, and $P_f \succeq 0$.

\begin{assumption}
$(A,B)$ stabilizable and $(A,Q)$ detectable. \label{asst1}
\end{assumption}
\begin{assumption}
$\mathcal{R}(G)\subseteq\mathcal{R}(B)$. \label{asst2}
\end{assumption}
Assumption~\ref{asst2} ensures every column of $G$ lies in the column space of $B$, which is a sufficient condition for nonsingularity of the block matrix appearing in the Riccati recursion (Proposition~24 of \citep{mannini:rawlings:2026a}). Relaxing it requires pseudoinverses in place of inverses and is left to future work.
\begin{assumption}
$G'P_fG\neq0$. \label{asst3}
\end{assumption}
\begin{assumption}
$Q\succ0$, $P_f \succ0$. \label{asst4}
\end{assumption}
The strict definiteness in Assumption~\ref{asst4} is not required for the finite horizon analysis; it is invoked starting from the steady-state problem in Section~\ref{sec:ss}.

The signal bound disturbance attenuation regulator (SiDAR) is
\begin{equation}
    V^*(x_0) \eqbyd \min_{u_0}\max_{w_0} \; \min_{u_1}\max_{w_1} \; \cdots \min_{u_{N-1}}\max_{w_{N-1}} \;
 \frac{V(x_0, \useq, \wseq)}{\sum^{N-1}_{k=0} |w_k|^2 } \quad \wseq \in \bbW\label{signaldp-w}
\end{equation}

\subsection{Finite Horizon Solution}

In the companion paper \citep{mannini:rawlings:2026a}, the original minmax SiDAR \eqref{signaldp-w} is reformulated by introducing a Lagrange multiplier $\lambda \geq 0$ for the signal bound constraint, and shown to reduce to a scalar convex optimization in $\lambda$. The optimal multiplier $\lambda^*(x_0)$ is the minimizer of this scalar program; the stagewise bounds $\lambda_k$ in \eqref{lsi} below are feasibility lower bounds of the backward Riccati recursion, not separate decision variables.

The following result from \citep{mannini:rawlings:2026a} provides the finite horizon solution.

\begin{proposition}[Finite horizon SiDAR {\citep[Proposition 7]{mannini:rawlings:2026a}}]
    \label{prop:ndsignal}
Let Assumptions 1-3 hold. Consider the following scalar convex optimization 
\begin{gather}
\mathbf{L}_{si}: \quad \min_{\lambda \in [\lambda_1, \infty)}
     \; \frac12\!\left(\frac{x_0}{\sqrt{\alpha}}\right)'\!
           \Pi_0(\lambda)\!\left(\frac{x_0}{\sqrt{\alpha}}\right)
     +\frac{\lambda}{2}\ \label{lsi} 
\end{gather}
\begin{gather*}
\lambda_N \eqbyd \norm{G'P_fG} \\
\begin{split}
\lambda_k \eqbyd
\begin{cases}
 \begin{aligned}
  &\min_{\lambda\ge \lambda_{k+1}} \Bigl\{\, \lambda : \lambda =  \norm{ G'\Pi_{k+1}(\lambda)G} \Bigr\}
 \end{aligned} \\
 \qquad \qquad \text{if }\;\norm{G'\Pi_{k+1}(\lambda_{k+1})G}>\lambda_{k+1}\\[10pt]
 \lambda_{k+1} \\
 \qquad \qquad \text{if }\;\norm{G'\Pi_{k+1}(\lambda_{k+1})G}\le\lambda_{k+1}
\end{cases}
\end{split}
\end{gather*}
    subject to the Riccati recursion
    \begin{equation}
        \Pi_k(\lambda) = Q+A'\Pi_{k+1}A-A' \Pi_{k+1} \begin{bmatrix} B & G \end{bmatrix}
     M_k(\lambda)^{-1}
    \begin{bmatrix} B' \\ G'\end{bmatrix}\Pi_{k+1}A \label{signalrec1}
    \end{equation}
    where
    \[
    M_k(\lambda) \eqbyd \begin{bmatrix} B'\Pi_{k+1}B + R & B'\Pi_{k+1} G \\ (B'\Pi_{k+1}G)' & G'\Pi_{k+1}G - \lambda I  \end{bmatrix}
    \]
    for $k \in [0,1,\dots,N-1]$ and terminal condition $\Pi_N = P_f \succeq0$. Given the solution to the \textit{scalar} convex optimization \eqref{lsi}, $\lambda^*(x_0)$, then
    \begin{enumerate}
        \item The optimal control policy $u^*_k(x_k;\lambda^*)$ to \eqref{signaldp-w} satisfies the stationary conditions
    \begin{equation}
     M_k(\lambda^*)
    \begin{bmatrix} u_k \\ z_k \end{bmatrix}^* = -\begin{bmatrix} B' \\ G'\end{bmatrix} \Pi_{k+1}A \; x_k \label{ndsc2}
    \end{equation} 
    \item The optimal disturbance policy $w^*_k(x_k;\lambda^*) = \overline{w}_k \cap \bbW$ to \eqref{signaldp-w} satisfies 
    \begin{equation}
    \begin{split}
    (B'\Pi_{k+1}G)'u^*_k(x_k;\lambda^*) &+(G'\Pi_{k+1}G - \lambda^* I ) \ \overline{w}_k \\
    &= -G'\Pi_{k+1}Ax_k
    \end{split}
    \label{ndwcond2}
    \end{equation}
    \item The optimal cost to \eqref{signaldp-w} is
    \begin{equation}
    V^*(x_0) = (1/2)\;(\frac{x_0}{\sqrt{\alpha}})' \Pi_0(\lambda^*) (\frac{x_0}{\sqrt{\alpha}})  + \lambda^* /2 \label{ndoc2}
    \end{equation} 
    \item For $\lambda\geq\lambda_1$, $\Pi_{k}(\lambda)$ is monotonic nonincreasing in $k$ and in $\lambda$.
    \end{enumerate}
\end{proposition}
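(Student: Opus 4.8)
The plan is to convert the fractional minimax \eqref{signaldp-w} into a one-parameter family of ordinary finite-horizon linear--quadratic dynamic games indexed by a multiplier $\lambda\ge 0$ on the disturbance-energy budget, to solve each such game by backward dynamic programming---which reproduces exactly the Riccati recursion \eqref{signalrec1} and the stationary conditions \eqref{ndsc2}--\eqref{ndwcond2}---and then to recognize \eqref{lsi} as the outer dual program in $\lambda$, with items~1--3 read off at its minimizer $\lambda^*(x_0)$ and item~4 a separate monotonicity argument. First comes the dual reduction. The worst-case disturbance exhausts its budget---complementary slackness forces $\sum_k\norm{w_k}^2=\alpha$ once the active multiplier is positive, and it is positive since $\norm{G'P_fG}>0$ (Assumption~3) makes the smallest admissible multiplier $\lambda_1\ge\lambda_N=\norm{G'P_fG}$ strictly positive---so \eqref{signaldp-w} is the budget-constrained minimax $V^*(x_0)=\alpha^{-1}\min_{\useq}\max_{\wseq\in\bbW}V(x_0,\useq,\wseq)$. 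Dividing the cost by $\alpha$ and rescaling $(x,u,w)\mapsto\sqrt\alpha\,(x,u,w)$ turns the initial state into $x_0/\sqrt\alpha$ and $\bbW$ into the unit ball; dualizing that single Euclidean-ball constraint with multiplier $\tfrac12\lambda$ and invoking trust-region (S-procedure) strong duality for the inner maximization yields $V^*(x_0)=\min_{\lambda\ge 0}\{\tfrac{\lambda}{2}+g(\lambda)\}$, where $g(\lambda)$ is the value of the unconstrained game with stage cost $\ell(x,u)-\tfrac{\lambda}{2}\norm{w}^2$, terminal cost $\ell_f$, and initial state $x_0/\sqrt\alpha$.

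Next, solve the $\lambda$-game by backward induction with cost-to-go $\tfrac12 x'\Pi_k(\lambda)x$ and $\Pi_N=P_f$. At stage $k$ one minimizes over $u_k$, then maximizes over $w_k$, the quadratic $\ell(x_k,u_k)-\tfrac{\lambda}{2}\norm{w_k}^2+\tfrac12(Ax_k+Bu_k+Gw_k)'\Pi_{k+1}(Ax_k+Bu_k+Gw_k)$, whose $(u_k,w_k)$-Hessian is precisely $M_k(\lambda)$. Because $w_k$ is maximized after $u_k$ is fixed, the stage value is finite iff the $w_k$-block $G'\Pi_{k+1}(\lambda)G-\lambda I$ is negative semidefinite, i.e.\ $\norm{G'\Pi_{k+1}(\lambda)G}\le\lambda$; when it is (strictly), the Schur complement of that block in $M_k(\lambda)$ is $\succeq R+B'\Pi_{k+1}B\succ 0$, so the $u_k$-step is automatically well-posed, the stationary point is characterized by \eqref{ndsc2} (with \eqref{ndwcond2} its second block after substituting $u_k^*$), and back-substituting into the quadratic yields \eqref{signalrec1}. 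Running this backward and using that $\lambda\mapsto\Pi_k(\lambda)$ is continuous and nonincreasing (part of item~4) shows the recursion stays finite exactly when $\lambda$ dominates every threshold $\lambda_N=\norm{G'P_fG},\lambda_{N-1},\dots,\lambda_1$---each $\lambda_j$ the least $\lambda\ge\lambda_{j+1}$ with $\norm{G'\Pi_j(\lambda)G}\le\lambda$---and since this sequence is nonincreasing by construction the single requirement is $\lambda\ge\lambda_1$; Assumption~2, $\mathcal R(G)\subseteq\mathcal R(B)$, is what keeps $\Pi_0(\lambda)$ finite at the boundary $\lambda=\lambda_1$, where a $w_0$-direction of $G'\Pi_1(\lambda)G-\lambda I$ becomes null, $M_0(\lambda)^{-1}$ degenerates to a pseudoinverse, and the stage disturbance ceases to be unique (the ``$\overline{w}_k\cap\bbW$'' in item~2 selecting the one in $\bbW$). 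Hence $g(\lambda)=\tfrac12(x_0/\sqrt\alpha)'\Pi_0(\lambda)(x_0/\sqrt\alpha)$ for $\lambda\ge\lambda_1$ and $+\infty$ otherwise, so $V^*(x_0)=\min_{\lambda\ge\lambda_1}\{\tfrac12(x_0/\sqrt\alpha)'\Pi_0(\lambda)(x_0/\sqrt\alpha)+\tfrac{\lambda}{2}\}$, which is \eqref{lsi}; reading off the game's optimal feedback laws at the minimizer $\lambda^*(x_0)$ gives items~1--3, with \eqref{ndoc2} immediate.

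For item~4, write $\tfrac12 x'\Pi_k(\lambda)x=\min_{u}\max_{w}\big[\sum_j\ell(x_j,u_j)+\ell_f(x_N)-\tfrac{\lambda}{2}\sum_j\norm{w_j}^2\big]$, valid for $\lambda\ge\lambda_1$: along any fixed admissible trajectory the bracket is nonincreasing in $\lambda$, and both $\min$ and $\max$ preserve monotonicity, so $x'\Pi_k(\lambda)x$ is nonincreasing in $\lambda$ for every $x$, i.e.\ $\Pi_k(\cdot)$ is Loewner-nonincreasing; writing the same value as $\min_u[\ell+\max_w(\tfrac12(\cdot)'\Pi_{k+1}(\cdot)-\tfrac{\lambda}{2}\norm{w}^2)]$ shows it is monotone in the argument $\Pi_{k+1}$, and backward induction from the base comparison $\Pi_{N-1}(\lambda)\succeq\Pi_N$ gives $\Pi_k(\lambda)\succeq\Pi_{k+1}(\lambda)$. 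For convexity of \eqref{lsi}, a minimax theorem for the (well-posed) LQ game gives $\tfrac12 x'\Pi_0(\lambda)x=\max_{\wseq}\big[c_x(\wseq)-\tfrac{\lambda}{2}\sum_j\norm{w_j}^2\big]$, where $c_x(\wseq)=\min_{\useq}V(x,\useq,\wseq)$ is an ordinary LQR cost with the disturbance sequence known, independent of $\lambda$; this exhibits $\lambda\mapsto x'\Pi_0(\lambda)x$ as a supremum of affine functions, hence convex, so the objective of \eqref{lsi} is convex, and being coercive ($\tfrac{\lambda}{2}\to\infty$ while $\Pi_0\succeq 0$) it attains its minimum on $[\lambda_1,\infty)$.

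I expect the main obstacle to be making the first two steps rigorous and sharp: establishing that the alternating fractional game \eqref{signaldp-w} with its global energy constraint genuinely collapses to the budget-constrained minimax and then to the scalar dual---in particular that the disturbance gains nothing from state feedback (standard for LQ games under the definiteness condition, but delicate) and that strong duality holds for the ball-constrained inner maximum---and proving that $\norm{G'\Pi_{k+1}(\lambda)G}\le\lambda$ is the \emph{exact}, not merely sufficient, finiteness threshold at each stage, so that the threshold recursion is correct and the feasible set of \eqref{lsi} is precisely $[\lambda_1,\infty)$ including its endpoint; this endpoint analysis is where Assumption~2 and the continuity and monotonicity of $\lambda\mapsto\Pi_k(\lambda)$ carry the weight.
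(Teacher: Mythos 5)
First, note that this paper does not prove \cref{prop:ndsignal} at all: it is imported verbatim from the companion paper \citep[Proposition 7]{mannini:rawlings:2026a}, so there is no in-paper proof to compare your argument against; I can only judge your sketch on its own terms and against the structure the statement itself demands. Your overall architecture (a $\lambda$-parameterized family of soft-constrained LQ games solved by backward dynamic programming, giving \eqref{signalrec1}--\eqref{ndwcond2}, with \eqref{lsi} as the outer scalar problem and the thresholds $\lambda_k$ as finiteness conditions) is the natural one and is consistent with how the present paper uses the result.

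The genuine gap is the very first reduction, and it is not a technicality you can defer. You assert that the worst-case disturbance exhausts its budget, ``complementary slackness forces $\sum_k\norm{w_k}^2=\alpha$ once the active multiplier is positive, and it is positive since $\lambda_1\ge\norm{G'P_fG}>0$,'' and hence that \eqref{signaldp-w} equals $\alpha^{-1}\min_{\useq}\max_{\wseq\in\bbW}V$. This conflates two different multipliers: $\lambda_1$ is a feasibility threshold for the recursion \eqref{signalrec1} (it guarantees $G'\Pi_{k+1}(\lambda)G-\lambda I\preceq 0$), not the multiplier of the budget constraint in \eqref{signaldp-w}, so its positivity says nothing about the budget being tight. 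Worse, for the fractional objective the maximizer is \emph{not} obviously pushed to the boundary: for $x_0\neq 0$ the numerator $V$ is bounded below by a positive constant while the denominator $\sum_k\norm{w_k}^2$ can be made arbitrarily small, so shrinking the disturbance inflates the ratio; whatever argument the companion result rests on must confront exactly this interaction between the ratio, the alternating (feedback) information structure, and the global constraint, and your sketch replaces it with an assertion. Likewise, invoking ``trust-region (S-procedure) strong duality for the inner maximization'' is not available off the shelf here, because the inner maximization is a sequential feedback maximization embedded in an alternating min--max, not a single static quadratic over a ball; exactness of the outer minimization over $\lambda$ (which is what makes \eqref{ndoc2} an equality rather than an upper bound) is precisely what needs proof. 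Finally, in item~4 your in-$k$ monotonicity argument hinges on the base comparison $\Pi_{N-1}(\lambda)\succeq\Pi_N=P_f$, which you state without justification; for an arbitrary terminal weight $P_f\succeq 0$ a one-step value-iteration map need not dominate $P_f$, so this base case requires an argument tied to the standing assumptions rather than induction alone. You flagged the first issue yourself as ``the main obstacle,'' but as written the proposal leaves the central equivalence --- fractional alternating game $=$ scalar dual of the soft-constrained game family --- unproved.
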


\subsection{Solution Regions}

For fixed budget $\alpha$, the state space partitions into two regions.

\begin{definition}[Solution regions for SiDAR] \hfill
\begin{enumerate}
    \item  Region $\mathcal{X}_{L}(\alpha) \subseteq \bbR^n$ is the set of initial states $x_0$ for which $\lambda^*(x_0) = \lambda_1$ is optimal in problem $\mathbf{L}_{si}$ \eqref{lsi}.
    \item Region $\mathcal{X}_{NL}(\alpha) \subseteq \bbR^n$ is the set of initial states $x_0$ for which $\lambda^*(x_0) >  \lambda_1$ is optimal in problem $\mathbf{L}_{si}$ \eqref{lsi}.
\end{enumerate}
\end{definition}

From Proposition 15 in Mannini and Rawlings~\citep{mannini:rawlings:2026a}, the region $\mathcal{X}_{L}(\alpha)$ is an ellipsoid centered at the origin, and the optimal control is linear in $x$ within this region.

\section{Steady-state SiDAR}
\label{sec:ss}
This section introduces the steady-state SiDAR and derives its LMI representation via \cref{prop:lmigen}.

\subsection{Steady-state problem}

Let Assumptions 1-4 hold. Given the linear system \eqref{system}, we define the following optimization, denoted as steady‑state SiDAR
\begin{subequations}\label{eq:ss-problem}
\begin{align}
  \min_{\lambda,\Pi}\;&\; V(x_0,\lambda,\Pi)
  \label{eq:ss-problem:cost}\\
  \text{s.\,t. }& \;\lambda\geq\norm{G'\Pi G}\quad
                  g(\lambda,\Pi)=0
  \label{eq:ss-problem:constraints}
\end{align}
\end{subequations}
where
\[
V(x,\lambda,\Pi) \eqbyd (1/2) (\frac{x_0}{\sqrt{\alpha}})' \Pi (\frac{x_0}{\sqrt{\alpha}})  + \lambda /2 
\]
\begin{equation}
\begin{split}
g(\lambda,\Pi) &\eqbyd Q+A'\Pi A -A' \Pi \begin{bmatrix} B & G \end{bmatrix} \\
&\quad \begin{bmatrix} B'\Pi B + R & B'\Pi G \\ (B'\Pi G)' & G'\Pi G - \lambda I  \end{bmatrix}^{-1}
    \begin{bmatrix} B' \\ G'\end{bmatrix}\Pi A - \Pi
\end{split}
\label{gareinv}
\end{equation}
\[
  M(\lambda,\Pi)\eqbyd
  \begin{bmatrix}
      B' \Pi B + R & B' \Pi G\\
      G' \Pi B     & G' \Pi G - \lambda I
  \end{bmatrix}
\]
Let $(\overline{\lambda},\overline{\Pi})$ denote the solution to \eqref{eq:ss-problem}. We define two solution regions for the steady-state problem \eqref{eq:ss-problem}.
\begin{definition}[Solution regions for \eqref{eq:ss-problem}] \hfill
\begin{enumerate}
    \item  Region $\overline{\mathcal{X}}_{L}(\alpha) \subseteq \bbR^n$ is the set of states $x_0$ for which $\overline{\lambda} = \norm{G'\overline{\Pi}G}$ is optimal in problem \eqref{eq:ss-problem}.
    \item Region $\overline{\mathcal{X}}_{NL}(\alpha) \subseteq \bbR^n$ is the set of states $x_0$ for which $\overline{\lambda} > \norm{G'\overline{\Pi}G}$ is optimal in problem \eqref{eq:ss-problem}.
\end{enumerate}
\end{definition}
\cref{prop:lmigen} establishes an LMI for the steady-state SiDAR \eqref{eq:ss-problem}. Existence of a solution to the LMI is addressed in \cref{prop:lmi-existence}.

\begin{proposition}[LMI for steady-state SiDAR]
    \label{prop:lmigen}
The solution to steady‑state SiDAR \eqref{eq:ss-problem} is implied by the following optimization
\begin{equation}
        \min_{\lambda,\chi,P,F} \; \lambda /2 + \chi/2 \label{inflmi}
    \end{equation}
    subject to
\begin{gather*}
\begin{bmatrix}
   P & (AP - BF)' & 0 & P\hat{Q}' - F'\hat{R}'\\
   AP - BF & P & G & 0 \\
   0 & G' & \lambda I & 0\\
   (P\hat{Q}' - F'\hat{R}')'  & 0 & 0 & I
\end{bmatrix} \succeq 0 \\
\begin{bmatrix}
   P & \frac{x_0}{\sqrt{\alpha}}\\
   \frac{x_0'}{\sqrt{\alpha}} & \chi 
\end{bmatrix}\succeq 0 
\end{gather*}    
    where $K = -F P^{-1}$, $P = \Pi^{-1}$, $\hat{Q}' = \begin{bmatrix} Q^{1/2} & 0 \end{bmatrix}$, and $\hat{R}' = \begin{bmatrix} 0 & R^{1/2} \end{bmatrix}$.
\end{proposition}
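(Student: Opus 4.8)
The plan is to recognize the convex program \eqref{inflmi} as the standard linearizing reformulation of the indefinite (game) algebraic Riccati equation $g(\lambda,\Pi)=0$ in \eqref{gareinv}: repeated Schur complements, the change of variables $P=\Pi^{-1}$, $F=-KP$, and a completion-of-squares step eliminating the gain. Write $\mathcal B\eqbyd\begin{bmatrix}B&G\end{bmatrix}$ and let $\Theta(\lambda,\Pi)\eqbyd Q+A'\Pi A-A'\Pi\mathcal B\,M(\lambda,\Pi)^{-1}\mathcal B'\Pi A$, so $g(\lambda,\Pi)=\Theta(\lambda,\Pi)-\Pi$; throughout $P\succ0$, which is implicit in the domain $P\in\mathbb S^n$ and consistent with $\overline\Pi\succ0$ under \cref{asst4}. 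I would first dispose of the second LMI: by the Schur complement with respect to $P\succ0$ it is equivalent to $\chi\ge(x_0/\sqrt\alpha)'\Pi(x_0/\sqrt\alpha)$, so at a minimizer of \eqref{inflmi} the slack $\chi$ attains this bound and the objective $\lambda/2+\chi/2$ equals $V(x_0,\lambda,\Pi)$. Thus \eqref{inflmi} minimizes $V(x_0,\lambda,\Pi)$ over those $(\lambda,\Pi)$, $\Pi\succ0$, for which \emph{some} $F$ satisfies the first LMI.

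Next I would peel off the first LMI by successive Schur complements. Eliminating the lower-right identity block contributes $-(PQP+F'RF)$ to the $(1,1)$ entry (using $P\hat Q'=\begin{bmatrix}PQ^{1/2}&0\end{bmatrix}$, $F'\hat R'=\begin{bmatrix}0&F'R^{1/2}\end{bmatrix}$); eliminating the $\lambda I$ block (which forces $\lambda>0$ since $G\neq0$ by \cref{asst3}) leaves
\[
\begin{bmatrix} P-PQP-F'RF & (AP-BF)'\\ AP-BF & P-\tfrac1\lambda GG'\end{bmatrix}\succeq0 .
\]
The condition $P-\tfrac1\lambda GG'\succeq0$, i.e.\ $\lambda I\succeq\Pi^{1/2}GG'\Pi^{1/2}$, is exactly $\lambda\ge\norm{G'\Pi G}$ (the two matrices have the same spectrum), recovering the first constraint of \eqref{eq:ss-problem:constraints}. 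On $\{\lambda>\norm{G'\Pi G}\}$ the $(2,2)$ block is nonsingular; a final Schur complement, conjugation by $\Pi=P^{-1}$, the substitution $F=-KP$ (whence $FP^{-1}=-K$ and $(AP-BF)P^{-1}=A+BK$), and the Woodbury identity $(\Pi^{-1}-\tfrac1\lambda GG')^{-1}=\Pi+\Pi G(\lambda I-G'\Pi G)^{-1}G'\Pi$ turn the first LMI into
\[
\Pi\succeq Q+K'RK+(A+BK)'\Pi(A+BK)+(A+BK)'\Pi G(\lambda I-G'\Pi G)^{-1}G'\Pi(A+BK)\eqbyd\Theta_K(\lambda,\Pi).
\]

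Then I would remove the gain. For $\lambda>\norm{G'\Pi G}$, $x'\Theta_K(\lambda,\Pi)x$ is the value of $\max_w\bigl[x'Qx+x'K'RKx-\lambda\norm{w}^2+((A+BK)x+Gw)'\Pi((A+BK)x+Gw)\bigr]$, while $x'\Theta(\lambda,\Pi)x$ is the value of the full saddle $\min_u\max_w$ of the same quadratic — the inner max being concave because $\lambda I-G'\Pi G\succ0$, the outer min convex because the Schur complement of $M(\lambda,\Pi)$ is negative definite (again from $\lambda I\succ G'\Pi G$ and $R\succ0$), and the saddle value is precisely the Riccati formula in $\Theta$. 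Hence $\Theta_K(\lambda,\Pi)\succeq\Theta(\lambda,\Pi)$ for every $K$, with equality at the unique minimizing gain, so ``$\exists F$ solving the first LMI'' $\Longleftrightarrow$ $\Pi\succeq\Theta(\lambda,\Pi)$ $\Longleftrightarrow$ $g(\lambda,\Pi)\preceq0$. Therefore \eqref{inflmi} equals $\min\{V(x_0,\lambda,\Pi):\lambda\ge\norm{G'\Pi G},\,g(\lambda,\Pi)\preceq0\}$. To pass from $g\preceq0$ to $g=0$: since $V(x_0,\cdot)$ is nondecreasing in $\Pi$, for each feasible $\lambda$ the minimizing $\Pi$ is the Loewner-least solution of $g(\lambda,\Pi)\preceq0$, and that solution satisfies $g(\lambda,\Pi)=0$ — if $\Pi_{\min}$ is least then $\Theta(\lambda,\Pi_{\min})\preceq\Pi_{\min}$, Loewner-monotonicity of $\Theta(\lambda,\cdot)$ (as in \cref{prop:ndsignal}(4)) makes $\Theta(\lambda,\Pi_{\min})$ again feasible, hence $\succeq\Pi_{\min}$, so $\Theta(\lambda,\Pi_{\min})=\Pi_{\min}$; existence of such a positive definite least element is \cref{prop:lmi-existence}. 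Conversely every point feasible for \eqref{eq:ss-problem} has $g=0$, hence $g\preceq0$, and is LMI-feasible with equal objective, so the optimal values coincide, $(\overline\lambda,\overline\Pi)=(\lambda^*,(P^*)^{-1})$ solves \eqref{eq:ss-problem}, and the equality case above identifies $K=-F^*(P^*)^{-1}$ as the optimal gain.

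The main obstacle is the degenerate boundary $\lambda=\norm{G'\Pi G}$ (region $\overline{\mathcal X}_{L}(\alpha)$), where $P-\tfrac1\lambda GG'$ is singular and both the last Schur complement and the Woodbury inversion break down. There one must invoke the generalized (Moore--Penrose) Schur complement together with the range condition $\mathcal R\bigl((AP-BF)'\bigr)\subseteq\mathcal R\bigl(P-\tfrac1\lambda GG'\bigr)$, check that this reproduces the degenerate form of $g(\lambda,\Pi)=0$, or argue by continuity from $\lambda>\norm{G'\Pi G}$ — which is precisely the place where the degenerate/nondegenerate dichotomy of \cref{sec:inf} must be brought in. Proving the least-solution fact used above in this indefinite setting (and hence establishing \cref{prop:lmi-existence}) is the other substantive ingredient.
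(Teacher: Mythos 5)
Your route is genuinely different from the paper's: you reduce the first LMI by successive Schur complements, the congruence $P=\Pi^{-1}$, and the Woodbury identity (essentially re-deriving \cref{lemma:lmi,lemma:lmisd}), eliminate the gain by a saddle/completion-of-squares argument, and then try to force $g=0$ through a least-fixed-point argument; the paper instead writes the SDP dual, establishes strict primal feasibility and strong duality, and extracts the Riccati equality directly from complementary slackness ($\Sigma W=0$). Your reduction up to ``$\exists F$ iff $g(\lambda,\Pi)\preceq 0$ and $\lambda\ge\norm{G'\Pi G}$'' is sound on the open region $\lambda>\norm{G'\Pi G}$, but two steps do not hold up. The decisive one is the passage from $g(\lambda,\Pi)\preceq 0$ to $g(\lambda,\Pi)=0$: the objective depends on $\Pi$ only through the scalar $(x_0/\sqrt{\alpha})'\Pi(x_0/\sqrt{\alpha})$ --- a rank-one functional, and a constant when $x_0=0$ --- so a minimizer of \eqref{inflmi} need not be a Loewner-least element of $\{\Pi : g(\lambda^*,\Pi)\preceq 0\}$; for $x_0=0$ any feasible $\Pi$ at the optimal $\lambda^*$ is optimal, including ones with strict Riccati slack. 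Your monotone-operator argument shows only that \emph{some} optimizer satisfies $g=0$ and that optimal values agree, and even that requires existence of a Loewner-least solution for the optimal $\lambda$, which you attribute to \cref{prop:lmi-existence}; that proposition only proves attainment of the SDP optimum by compactness, not existence of a least solution of the Riccati inequality at fixed $\lambda$. The paper's complementary-slackness computation is precisely the device that ties the particular optimizer $(\lambda^*,(P^*)^{-1})$ to the equality $g(\lambda^*,\Pi)=0$; without it, or a substitute such as constructing the least solution as the limit of the Riccati iteration, your conclusion that the LMI output solves \eqref{eq:ss-problem} is not established.

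The second gap is the boundary $\lambda=\norm{G'\Pi G}$, which you defer as an ``obstacle'': it is not a removable corner case but exactly the active-constraint regime that defines nondegenerate systems and the region $\overline{\mathcal{X}}_L(\alpha)$ (e.g.\ $x_0=0$ and the $H_\infty$ specialization), so the case of principal interest is the one where your chain of Schur complements and the Woodbury inversion of $P-\tfrac{1}{\lambda}GG'$ breaks down. The paper sidesteps this by staying with the semidefinite equivalence of \cref{lemma:lmisd} and the pseudoinverse identities of \cref{receq}, and by observing that under Assumption~\ref{asst2} the full block matrix $M(\lambda,\Pi)$ remains invertible at $\lambda=\norm{G'\Pi G}$ even though $\lambda I-G'\Pi G$ is singular; some argument of this kind (generalized Schur complements with the attendant range conditions, or a careful continuity limit) must be supplied before your equivalence, and hence your value-matching argument, covers the feasible set claimed in the proposition.
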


\noindent\textbf{Proof sketch.}\ Under \cref{asst1}, the optimization \eqref{inflmi} admits a strictly feasible primal point, so strong duality holds between \eqref{inflmi} and its dual. Stationarity, primal feasibility, and complementary slackness then yield the steady-state Riccati identity \eqref{eq:ss-problem:constraints}. The full proof is in the appendix (\cref{app:proofs}).

\Cref{prop:lmi-existence} establishes that the LMI optimization~\eqref{inflmi} admits an optimal solution under \cref{asst1,asst2,asst3,asst4}.

\begin{proposition}[Existence of LMI solution]
\label{prop:lmi-existence}
Let Assumptions~\ref{asst1}--\ref{asst4} hold. The optimization problem \eqref{inflmi} has an optimal solution $(\lambda^*, \chi^*, P^*, F^*)$.
\end{proposition}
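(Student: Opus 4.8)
The plan is to show that the feasible set of \eqref{inflmi} is nonempty and closed, that the objective $\lambda/2+\chi/2$ is bounded below on it, and that the intersection of the feasible set with a sublevel set of the objective is compact; attainment of the minimum then follows from the fact that a continuous function attains its infimum on a nonempty compact set. Nonemptiness is already available: the proof of \cref{prop:lmigen} constructs a strictly feasible point $(P,F,\lambda,\chi)$ using only stabilizability (\cref{asst1}). Closedness holds because $(P,F,\lambda)\mapsto S(P,F,\lambda)$ and $(P,\chi)\mapsto T(P,\chi)$ are affine, hence continuous, and the positive semidefinite cone is closed. For the lower bound, note that for any feasible point the $(3,3)$ block of $S(P,F,\lambda)$ is $\lambda I\succeq 0$, so $\lambda\ge 0$, and the $(2,2)$ entry of $T(P,\chi)$ is $\chi\ge 0$; hence $\lambda/2+\chi/2\ge 0$ and the optimal value $p^\star$ is finite.

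The only substantive step is uniform boundedness of $P$ and $F$ over the feasible set, and this is precisely where \cref{asst4} ($Q\succ 0$, together with $R\succ 0$) is used. Taking the Schur complement of $S(P,F,\lambda)\succeq 0$ with respect to its invertible $(4,4)$ block $I$, and using $\hat{Q}'\hat{Q}=Q$, $\hat{R}'\hat{R}=R$, $\hat{Q}'\hat{R}=0$, the $(1,1)$ block of the resulting $3\times3$-block positive semidefinite matrix is $P-(PQP+F'RF)$; in particular $P\succeq PQP+F'RF\succeq PQP$. Writing $\mu=\lambda_{\min}(Q)>0$, this gives for every unit vector $v$ the chain $v'Pv\ge\mu\norm{Pv}^2\ge\mu(v'Pv)^2$, hence $v'Pv\le 1/\mu$, so $0\preceq P\preceq\mu^{-1}I$. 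Consequently $F'RF\preceq P\preceq\mu^{-1}I$, and since $R\succ 0$ with $\rho=\lambda_{\min}(R)>0$, $\rho F'F\preceq\mu^{-1}I$, i.e.\ $\norm{F}\le(\mu\rho)^{-1/2}$. These bounds hold for every feasible point, independently of the point.

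With the boundedness in hand the argument concludes quickly. Fix $\gamma=p^\star+1$ and let $C_\gamma$ be the intersection of the feasible set with $\{\lambda/2+\chi/2\le\gamma\}$. On $C_\gamma$ we have $0\le\lambda\le 2\gamma$, $0\le\chi\le 2\gamma$, $0\preceq P\preceq\mu^{-1}I$, and $\norm{F}\le(\mu\rho)^{-1/2}$, so $C_\gamma$ is bounded; it is closed as an intersection of closed sets, hence compact, and nonempty since $p^\star<\gamma$. The continuous objective therefore attains its minimum on $C_\gamma$ at some $(\lambda^\star,\chi^\star,P^\star,F^\star)$; since every feasible point outside $C_\gamma$ has objective value $>\gamma>p^\star$, this point is a global minimizer of \eqref{inflmi}.

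I expect the boundedness step to carry essentially all the weight: without $Q\succ 0$ the variable $P$ need not be bounded on the feasible set, and attainment could genuinely fail along a direction on which $\chi\to 0$ while $P$ grows. The one computation to verify carefully is the Schur-complement identity $(P\hat{Q}'-F'\hat{R}')(P\hat{Q}'-F'\hat{R}')'=PQP+F'RF$, which follows from the block structure of $\hat{Q}',\hat{R}'$ and is already implicit in the equivalence used in the proof of \cref{prop:lmigen}.
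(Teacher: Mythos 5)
Your proposal is correct and follows essentially the same route as the paper: nonemptiness from the strictly feasible point in the proof of \cref{prop:lmigen}, nonnegativity of $\lambda$ and $\chi$ from the blocks of $S$ and $T$, the Schur complement of the $(4,4)$ block yielding $P\succeq PQP+F'RF$, uniform bounds on $P$ and $F$ from $Q\succ0$, $R\succ0$, and Weierstrass on a compact sublevel set. Your bound $P\preceq\lambda_{\min}(Q)^{-1}I$ via the unit-vector and Cauchy--Schwarz argument is a harmless variant of the paper's eigenvalue argument for $Q^{1/2}PQ^{1/2}\preceq I$ (which gives $P\preceq Q^{-1}$), and both suffice.
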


\noindent\textbf{Proof sketch.}\ The objective $\lambda/2 + \chi/2$ is continuous and bounded below on the closed feasibility set of \eqref{inflmi}. The LMI constraints bound $\lambda$, $\chi$, and $P$ on every level set of the objective, so each level set is compact. The Weierstrass theorem then gives existence. The full proof is in the appendix (\cref{app:proofs}).

\subsection{System Classes}
\label{subsec:sys-classes}

We classify the linear system \eqref{system} based on the steady-state solution properties.
\begin{definition}[steady-state system classes]\hfill
\begin{enumerate}
\item A system \eqref{system} is \textit{nondegenerate} if the solution $(\overline{\lambda},\overline{\Pi})$ to the steady-state problem \eqref{eq:ss-problem} satisfies $\overline{\lambda} = \norm{G'\overline{\Pi}G}$ for $x_0 = 0$.
\item A system \eqref{system} is \textit{degenerate} if the solution $(\overline{\lambda},\overline{\Pi})$ to the steady-state problem \eqref{eq:ss-problem} satisfies $\overline{\lambda} > \norm{G'\overline{\Pi}G}$ for all $x_0 \in \bbR^n$.
\end{enumerate}
\end{definition}

\begin{remark}
For nondegenerate systems, the steady-state feasible region $\overline{\mathcal{X}}_L(\alpha)$ contains at least the origin. For degenerate systems the region $\overline{\mathcal{X}}_L(\alpha)$ is empty, $\overline{\mathcal{X}}_L(\alpha) = \varnothing$. 
\end{remark}

\section{Infinite Horizon SiDAR Solution}
\label{sec:inf}
This section analyzes the limiting behavior of the finite horizon SiDAR as $N \to \infty$. We establish conditions under which the finite horizon solution converges to the steady-state solution from \Cref{sec:ss} and identify turnpike behavior in degenerate systems where convergence holds only in a subset of the state space. Throughout this section, the disturbance budget $\alpha$ is fixed.

Define $\lambda^*(N)$ as the optimal solution to $\mathbf{L}_{si}$ \eqref{lsi} for horizon $N \in \mathbb{I}_{\geq 1}$, and $\Pi_{0}(\lambda^*(N))$ as the optimal initial stage matrix generated by the backward recursion \eqref{signalrec1} for horizon $N$ with terminal condition $\Pi_{N} = P_f$.

For all nondegenerate systems, the following result establishes convergence of the finite horizon solution to the SiDAR \eqref{signaldp-w} as $N \rightarrow \infty$ to a minimal steady-state solution $(\overline{\lambda}, \overline{\Pi})$, where minimal indicates the Loewner order on $\Pi$.

\begin{proposition}[Convergence: nondegenerate systems]
\label{infl}
Let Assumptions 1-4 hold. For every nondegenerate system, the finite horizon solution to the SiDAR \eqref{signaldp-w} converges to a minimal steady-state solution $(\overline{\lambda}, \overline{\Pi})$
\[
  \lim_{N \to \infty} \lambda^*(N) = \overline{\lambda} \quad \text{and} \quad \lim_{N \to \infty} \Pi_{0}(\lambda^*(N)) = \overline{\Pi}
\]
\end{proposition}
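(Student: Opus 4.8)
The plan is to prove convergence by the classical three–step route for Riccati recursions — monotonicity, boundedness, identification of the limit with a fixed point — carried out jointly for the multiplier $\lambda$ and the matrix $\Pi$, and then to transfer convergence from the fixed-$\lambda$ iterates to the optimizer $\lambda^*(N)$ of $\mathbf{L}_{si}$ using convexity. Throughout, write $R_\lambda$ for the time-invariant Riccati map of \eqref{signalrec1}, so that $\Pi_0(\lambda)$ for horizon $N$ equals $R_\lambda^N(P_f)$. Recall from the proof of \cref{prop:lmigen} that $M(\lambda,\Pi)$ is invertible on the feasible region $\lambda\ge\norm{G'\Pi G}$ under \cref{asst2}; hence $R_\lambda$ is continuous there, and it is order-preserving in its argument, being the value of a one-step min--max whose integrand is monotone in the terminal cost.

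\emph{Monotone convergence of the iterates and of the threshold.} Fix $\lambda$ admissible for every horizon, i.e.\ $\lambda\ge\sup_N\lambda_1(N)$; finiteness of this supremum is the obstacle discussed below. Time-invariance gives $\Pi_0^{(N)}(\lambda)=\Pi_1^{(N+1)}(\lambda)$, and monotonicity in $k$ (\cref{prop:ndsignal}(4)) gives $\Pi_1^{(N+1)}(\lambda)\preceq\Pi_0^{(N+1)}(\lambda)$; hence $\Pi_0^{(N)}(\lambda)$ is Loewner-nondecreasing in $N$. Bounded above, it converges to some $\Pi_\infty(\lambda)$; since each iterate stays in the feasible region and $R_\lambda$ is continuous there, $\Pi_\infty(\lambda)$ is a fixed point, $g(\lambda,\Pi_\infty(\lambda))=0$, and because the iteration approaches it from below with $R_\lambda$ order-preserving, $\Pi_\infty(\lambda)$ is the minimal solution of $g(\lambda,\cdot)=0$ dominating $P_f$. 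Re-indexing the recursion for $\lambda_k$ by time-to-go $j=N-k$ shows the resulting sequence is independent of $N$ and, by the inequality $\lambda_k\ge\lambda_{k+1}$ in \cref{prop:ndsignal}, nondecreasing in $j$, so $\lambda_1(N)$ is nondecreasing in $N$; being bounded, $\lambda_1(N)\uparrow\lambda_\infty$, and passing to the limit in $\lambda_1(N)=\norm{G'\Pi_{\,\cdot\,}(\lambda_1(N))G}$ together with the above yields $\lambda_\infty=\norm{G'\Pi_\infty(\lambda_\infty)G}$.

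\emph{The scalar program and identification of the limit.} The objective $h_N(\lambda)\eqbyd\tfrac12(x_0/\sqrt\alpha)'\Pi_0^{(N)}(\lambda)(x_0/\sqrt\alpha)+\lambda/2$ is convex on $[\lambda_1(N),\infty)$ and increases pointwise to the convex limit $h_\infty(\lambda)=\tfrac12(x_0/\sqrt\alpha)'\Pi_\infty(\lambda)(x_0/\sqrt\alpha)+\lambda/2$, while the feasible intervals decrease to $[\lambda_\infty,\infty)$. The minimizers $\lambda^*(N)$ are bounded, since $\lambda^*(N)/2\le h_N(\lambda^*(N))\le h_N(\lambda)$ for any fixed admissible $\lambda$ and the right side is bounded; a standard argument for convergence of minimizers of epi-convergent convex functions gives $\lambda^*(N)\to\bar\lambda\eqbyd\arg\min_{\lambda\ge\lambda_\infty}h_\infty(\lambda)$. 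Since $\Pi_0^{(N)}(\cdot)\to\Pi_\infty(\cdot)$ uniformly on a compact interval containing all $\lambda^*(N)$ — Dini's theorem, using monotonicity in $N$ with continuous limit — we get $\Pi_0^{(N)}(\lambda^*(N))\to\Pi_\infty(\bar\lambda)\eqbyd\bar\Pi$. Finally, $g(\bar\lambda,\bar\Pi)=0$ and $\bar\lambda\ge\norm{G'\bar\Pi G}$ (by the limiting relation above when $\bar\lambda=\lambda_\infty$, and by the finite-horizon inequality $\lambda^*(N)\ge\norm{G'\Pi_0^{(N)}(\lambda^*(N))G}$ otherwise), so $(\bar\lambda,\bar\Pi)$ is feasible for \eqref{eq:ss-problem}; the value $V^*(x_0;N)=h_N(\lambda^*(N))$ is nondecreasing in $N$ and converges to $V(x_0,\bar\lambda,\bar\Pi)$, and a monotonicity comparison (every steady-state-feasible pair dominating the corresponding finite-horizon iterate) shows this common limit equals the optimal value of \eqref{eq:ss-problem}, so $(\bar\lambda,\bar\Pi)$ is optimal; minimality of $\bar\Pi$ in the Loewner order then follows from the limit being approached from below.

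\emph{Main obstacle.} The delicate point — and precisely where nondegeneracy is needed — is the uniform-in-$N$ upper bound on $\Pi_0^{(N)}(\lambda)$, equivalently the finiteness of $\lambda_\infty=\sup_N\lambda_1(N)$: a nondegenerate system is one for which this limiting threshold is finite (the constraint $\lambda=\norm{G'\Pi G}$ being active at the origin forcing it), whereas for degenerate systems this bound fails on part of the state space and produces the turnpike analyzed later. I expect to secure the bound by exhibiting a concrete admissible pair — the stabilizing, $H_\infty$-type solution constructed in the strict-feasibility part of the proof of \cref{prop:lmigen}, whose existence is guaranteed by \cref{prop:lmi-existence} — and dominating the iterates by it via order-preservation of $R_\lambda$.
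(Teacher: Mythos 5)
The central gap is the role you assign to nondegeneracy. You reduce it to finiteness of $\sup_N\lambda_1(N)$ (equivalently, a uniform bound on the fixed-$\lambda$ iterates), and you leave even that to a sketched domination argument. But the threshold $\lambda_1(N)$ does not depend on $x_0$ at all, so ``fails on part of the state space'' is not meaningful for it, and boundedness is not where degenerate and nondegenerate systems part ways: the paper obtains a horizon-uniform bound for \emph{every} system satisfying Assumptions~1--4 by playing a suboptimal stabilizing gain $K$ against the energy-bounded disturbance, which gives $V^*(x_0)\le (c_1\norm{x_0}^2+c_2)/\alpha$, hence $\lambda^*(N)\le 2(c_1\norm{x_0}^2+c_2)/\alpha$ and $v'\Pi_k v\le 2c_1+2c_2/\alpha$; indeed \cref{infld}(1) asserts exactly this bounded convergence for degenerate systems as well. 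Because your proposal invokes nondegeneracy only through this (incorrect) characterization, every remaining step of your argument would apply verbatim to a degenerate system with $x_0\in\mathcal{X}_L^\infty(\alpha)$ and would ``prove'' that $\Pi_0(\lambda^*(N))$ converges to a fixed point of $g(\bar\lambda,\cdot)$ there too, contradicting \cref{infld}(2), where the limit satisfies $g>0$ (turnpike). So at least one step must silently fail, and it is the transfer from fixed-$\lambda$ convergence to the diagonal sequence.

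Concretely, the unsupported step is the Dini/epi-convergence transfer. Dini's theorem needs the limit map $\lambda\mapsto\Pi_\infty(\lambda)$ to be continuous on a compact interval containing all $\lambda^*(N)$; but the minimizers can be pinned at the moving feasibility boundary $\lambda_1(N)\uparrow\lambda_\infty$, the horizon-$N$ functions are not even defined to the left of $\lambda_1(N)$, and continuity of $\Pi_\infty(\cdot)$ at that boundary is precisely the delicate point --- it is what breaks in the degenerate $\mathcal{X}_L^\infty$ case. The paper's proof avoids this by (i) proving that $\lambda^*(N)$ is monotone nondecreasing in $N$, using the derivative formula $dL_k/d\gamma=\tfrac12-\tfrac{1}{2\alpha}\norm{\tilde J_k(\gamma)x_0}^2$ and the stacking property $\tilde J_{k+1}'\tilde J_{k+1}\succeq\tilde J_k'\tilde J_k$ (this also removes your reliance on convergence of minimizers under epi-convergence, which by itself yields only cluster points in the limiting argmin set), and (ii) sandwiching $P_k(\gamma_\infty^*)\preceq P_k(\gamma^*(k))\preceq P_k(\gamma^-)$ between fixed-$\gamma$ monotone limits rather than appealing to uniform convergence. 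Your planned boundedness-by-domination using the strict-feasibility point from \cref{prop:lmigen} is plausible but unfinished: you still need a terminal comparison $P_f\preceq\Pi$ and the one-step contraction of the min--max Riccati map below $\Pi$, whereas the strict Riccati inequality there is stated for one fixed gain. To repair the proof, supply the suboptimal-controller cost bound (or complete the domination argument), establish monotonicity of $\lambda^*(N)$, and make explicit where the nondegeneracy definition (activity of $\lambda=\norm{G'\Pi G}$ at the origin in \eqref{eq:ss-problem}) is used to exclude the pinned-at-the-boundary limit that produces the turnpike.
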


\noindent\textbf{Proof sketch.}\ For nondegenerate systems, the iterates $\Pi_0(\lambda^*(N))$ are monotone in $N$ and bounded above by $\overline{\Pi}$, and the multipliers $\lambda^*(N)$ are bounded above by $\overline{\lambda}$ via a stationary suboptimal control. Monotone convergence yields the limits $\lambda^*(N) \to \overline{\lambda}$ and $\Pi_0(\lambda^*(N)) \to \overline{\Pi}$; the steady-state identity \eqref{eq:ss-problem:constraints} extends to the limit by continuity. The full proof is in the appendix (\cref{app:proofs}).

For fixed budget $\alpha$, define the limit set
\[
  \mathcal{X}_L^\infty(\alpha)\;\eqbyd\;\bigcap_{N=1}^{\infty}\mathcal{X}_L(\alpha, N)
\]
and 
\[
\mathcal{X}^\infty_{NL}(\alpha) = \bbR^n \setminus \mathcal{X}^\infty_{L}(\alpha)
\]
where $\mathcal{X}_L(\alpha, N)$ is the region $\mathcal{X}_L(\alpha)$ for horizon $N \in \mathbb{I}_{\geq 1}$.

\begin{proposition}[Well-defined limit regions]
\label{prop:limit-regions}
Let $\mathcal{X}_L(\alpha, N)$ be the region from by Proposition 15 in Mannini and Rawlings~\citep{mannini:rawlings:2026a} for fixed $\alpha$ and horizon $N\in\mathbb{I}_{\ge 1}$.
Then $\mathcal{X}_L^\infty(\alpha)$ is nonempty, closed, and convex, and hence $\mathcal{X}_{NL}^\infty(\alpha)$ is well-defined and open.
\end{proposition}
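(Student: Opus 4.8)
The plan is to verify the three asserted properties of $\mathcal{X}_L^\infty(\alpha)$ one at a time, and then obtain the properties of $\mathcal{X}_{NL}^\infty(\alpha)$ for free as the complement of $\mathcal{X}_L^\infty(\alpha)$.

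For closedness and convexity I would invoke Proposition 15 of \citep{mannini:rawlings:2026a}: for each fixed $\alpha>0$ and each horizon $N\in\mathbb{I}_{\ge 1}$, the region $\mathcal{X}_L(\alpha,N)$ is an ellipsoid centered at the origin, i.e.\ a set $\{x\in\bbR^n : x'M_N x\le c_N\}$ with $M_N\succeq 0$ and $c_N>0$ (allowing the degenerate cases: an unbounded cylinder when $M_N$ is singular, or all of $\bbR^n$ when $M_N=0$). Every such set is closed and convex, and an arbitrary intersection of closed convex sets is closed and convex; hence $\mathcal{X}_L^\infty(\alpha)=\bigcap_{N\ge 1}\mathcal{X}_L(\alpha,N)$ is closed and convex.

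For nonemptiness the natural candidate is the origin. Evaluating the objective of $\mathbf{L}_{si}$ in \eqref{lsi} at $x_0=0$ collapses it to the map $\lambda\mapsto\lambda/2$, which is strictly increasing, so over the feasible set $[\lambda_1,\infty)$ its unique minimizer is $\lambda^*(0)=\lambda_1$. By the definition of the solution regions this gives $0\in\mathcal{X}_L(\alpha,N)$ for every $N$, hence $0\in\mathcal{X}_L^\infty(\alpha)$ and the intersection is nonempty. Combining the three facts, $\mathcal{X}_L^\infty(\alpha)$ is nonempty, closed, and convex; therefore $\mathcal{X}_{NL}^\infty(\alpha)=\bbR^n\setminus\mathcal{X}_L^\infty(\alpha)$ is a well-defined set and is open as the complement of a closed set.

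I do not expect a real obstacle in this argument: the only ingredient that is not elementary is the ellipsoid characterization of each finite-horizon region, which is imported wholesale from Proposition 15 of the companion paper and is precisely what delivers closedness of the individual sets being intersected. The one place to be slightly careful is to keep the degenerate ellipsoid cases in view (so the claim does not implicitly assert boundedness), and to run the nonemptiness argument for the finite-horizon linear region $\mathcal{X}_L(\alpha,N)$ rather than for the steady-state region $\overline{\mathcal{X}}_L(\alpha)$, since for degenerate systems the origin need not lie in the latter.
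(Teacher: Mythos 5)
Your proposal is correct and follows essentially the same route as the paper: cite Proposition 15 of the companion paper for the (possibly degenerate) ellipsoidal, hence closed and convex, form of each $\mathcal{X}_L(\alpha,N)$, note that the origin belongs to every one of them, and conclude via the facts that intersections of closed convex sets are closed and convex and that the complement of a closed set is open. The only cosmetic difference is that you establish $0\in\mathcal{X}_L(\alpha,N)$ directly from the optimization $\mathbf{L}_{si}$ (the objective at $x_0=0$ reduces to $\lambda/2$, minimized at $\lambda_1$), whereas the paper reads it off the ellipsoid inequality; both are equally valid.
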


\begin{proof}
For each $N$, Proposition 15 in Mannini and Rawlings~\citep{mannini:rawlings:2026a} gives
\[
\mathcal{X}_L(\alpha, N)=\Bigl\{x_0\in\mathbb{R}^n:
\bigl(\tfrac{x_0}{\sqrt{\alpha}}\bigr)'
\tilde J_N'(\lambda_1(N))\tilde J_N(\lambda_1(N))
\bigl(\tfrac{x_0}{\sqrt{\alpha}}\bigr)\le 1\Bigr\}
\]
with $\tilde J_N'(\lambda_1(N))\tilde J_N(\lambda_1(N))\succeq0$. Hence each $\mathcal{X}_L(\alpha, N)$ is a closed, convex ellipsoid (possibly degenerate). Moreover, $0\in\mathcal{X}_L(\alpha, N)$ for all $N$, so
$0\in\bigcap_{N=1}^{\infty}\mathcal{X}_L(\alpha, N)=\mathcal{X}_L^\infty(\alpha)$, proving nonemptiness. Intersections of closed and convex sets are closed and convex, so $\mathcal{X}_L^\infty(\alpha)$ is closed and convex. Therefore $\mathcal{X}_{NL}^\infty(\alpha)=\mathbb{R}^n\setminus\mathcal{X}_L^\infty(\alpha)$ is open.
\end{proof}

\begin{remark}[Membership test for $\mathcal{X}_L^\infty(\alpha)$]
\label{rem:limit-membership}
For $x_0 \in \bbR^n$, $x_0 \in \mathcal{X}_L^\infty(\alpha)$ if and only if
\[
(x_0/\sqrt{\alpha})' \tilde J_N'(\lambda_1(N)) \tilde J_N(\lambda_1(N)) (x_0/\sqrt{\alpha}) \leq 1
\]
for all $N \in \mathbb{I}_{\geq 1}$.
\end{remark}


\Cref{infld} establishes the limiting behavior of the finite horizon SiDAR for degenerate systems: the multiplier sequence converges, the $\mathcal{X}_L^\infty(\alpha)$ branch exhibits turnpike behavior with limit $(\lambda_\infty^*, \Pi_\infty(\lambda_\infty^*)) \neq (\overline{\lambda}, \overline{\Pi})$, and the $\mathcal{X}_{NL}^\infty(\alpha)$ branch converges to the steady-state solution.

\begin{proposition}[Convergence: degenerate systems]
\label{infld}
Let Assumptions 1-4 hold. For every degenerate system we have that
\begin{enumerate} 
\item the finite horizon solution to the SiDAR \eqref{signaldp-w} converges to a limit
\[
  \lim_{N \to \infty} \lambda^*(N) = \lambda^*_\infty \quad \text{and} \quad \lim_{N \to \infty} \Pi_{0}(\lambda^*(N)) = \Pi_\infty(\lambda^*_\infty)
\]
\item For $x_0\in\mathcal{X}^\infty_{L}(\alpha)$, we have $(\lambda^*_\infty,\Pi_\infty(\lambda^*_\infty))\neq(\overline{\lambda},\overline{\Pi})$ and $g(\lambda^*_\infty,\Pi_{\infty})>0$.
\item For $x_0\in\mathcal{X}^\infty_{NL}(\alpha)$ the finite horizon solution converges to a minimal steady-state solution $(\overline{\lambda}, \overline{\Pi})$
\[
  \lim_{N \to \infty} \lambda^*(N) = \overline{\lambda} \quad \text{and} \quad \lim_{N \to \infty} \Pi_{0}(\lambda^*(N)) = \overline{\Pi}
\]
\end{enumerate}
\end{proposition}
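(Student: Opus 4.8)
The plan is to re-use, almost verbatim, the forward-indexed apparatus from the proof of \cref{infl} and to split the analysis according to whether the limiting stage multiplier $\gamma^*_\infty$ strictly exceeds the limiting feasibility bound $\gamma_\infty$. First I would import the facts that did not use nondegeneracy: the recursion $P_k(\gamma)$ with $P_0=P_f$ and the one-step Riccati map $\mathrm{Ric}_\gamma$ (the right-hand side of \eqref{reverse}, so $P_{k+1}(\gamma)=\mathrm{Ric}_\gamma(P_k(\gamma))$ and $g(\lambda,\Pi)=\mathrm{Ric}_\lambda(\Pi)-\Pi$); the feasibility bounds $\gamma_k$, nondecreasing and bounded ($\gamma_k\le\gamma^*(k+1)\le C(x_0)$), hence $\gamma_k\nearrow\gamma_\infty<\infty$; the stage optimizers $\gamma^*(k)=\max\{\gamma_{k-1},\gamma^\star_k\}$, nondecreasing and bounded, hence $\gamma^*(k)\nearrow\gamma^*_\infty=:\lambda^*_\infty$ with $\gamma_\infty\le\gamma^*_\infty$; the uniform bound $\norm{P_k(\gamma)}\le C_P$ for $\gamma\ge\gamma_{k-1}$; and, for each fixed $\gamma\ge\gamma_\infty$, $P_k(\gamma)\nearrow P_\infty(\gamma)$ with $g(\gamma,P_\infty(\gamma))=0$ — obtained by passing to the limit in \eqref{reverse}, which is legitimate because, under \cref{asst2}, $M(\lambda,\Pi)$ is invertible (so $\mathrm{Ric}_\gamma$ is continuous) whenever $\lambda\ge\norm{G'\Pi G}$, including equality. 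This already gives $\lim_N\lambda^*(N)=\lambda^*_\infty$, half of part~1.

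For $x_0\in\mathcal{X}^\infty_{NL}(\alpha)$ (which will yield part~3 and the rest of part~1 there) I would first prove $\gamma^*_\infty>\gamma_\infty$: from $x_0\notin\mathcal{X}_L(\alpha,N_0)$ we get $\gamma^\star_{N_0}>\gamma_{N_0-1}$, and the comparison $dL_{k+1}/d\gamma\le dL_k/d\gamma$ on $[\gamma_k,\infty)$ (equivalently, $\norm{\tilde J_k(\gamma)x_0}^2$ increasing in $k$ and decreasing in $\gamma$) propagates this to $\norm{\tilde J_\infty(\gamma_\infty)x_0}^2>\alpha$, so some $\gamma^\star_N>\gamma_\infty$ and $\gamma^*_\infty\ge\gamma^\star_N>\gamma_\infty$; the converse, that $x_0\in\mathcal{X}^\infty_L(\alpha)$ forces $\norm{\tilde J_\infty(\gamma_\infty)x_0}^2\le\alpha$ hence $\gamma^*_\infty=\gamma_\infty$, is proved the same way, and this equivalence is one spot where real work is needed. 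Granting $\gamma^*_\infty>\gamma_\infty$, the squeeze from \cref{infl} applies unchanged: for $\gamma^-\in(\gamma_\infty,\gamma^*_\infty)$ and $N$ large, $P_N(\gamma^*_\infty)\preceq P_N(\gamma^*(N))\preceq P_N(\gamma^-)$; let $N\to\infty$, then $\gamma^-\uparrow\gamma^*_\infty$ using right-continuity of $\gamma\mapsto P_\infty(\gamma)$, and polarize to get $P_N(\gamma^*(N))\to P_\infty(\gamma^*_\infty)$. Passing to the limit in \eqref{reverse} gives $g(\gamma^*_\infty,P_\infty(\gamma^*_\infty))=0$ and $\gamma^*_\infty\ge\norm{G'P_\infty(\gamma^*_\infty)G}$, so $(\gamma^*_\infty,P_\infty(\gamma^*_\infty))$ solves \eqref{eq:ss-problem}; the Loewner argument of \cref{infl} (any solution dominates the monotone limit started from $P_f$, and $\overline\Pi\succeq\bar Q\succeq Q\succ0$) identifies it as the minimal one, i.e. as $(\overline\lambda,\overline\Pi)$.

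The remaining region $x_0\in\mathcal{X}^\infty_L(\alpha)$ is the heart of the argument and the place where the squeeze is unavailable, since $(\gamma_\infty,\gamma^*_\infty)=\varnothing$. Here $\gamma^\star_N\le\gamma_{N-1}$ for all $N$, so $\gamma^*(N)=\gamma_{N-1}\to\gamma_\infty=\lambda^*_\infty$ and $\Pi_0(\lambda^*(N))=P_N(\gamma_{N-1})$, a sequence independent of $x_0$. I would show $\{P_N(\gamma_{N-1})\}$ is bounded ($\preceq C_P I$) and monotone nondecreasing in $N$ — the latter by running the $(N+1)$-horizon recursion from the $N$-horizon value and combining stage-monotonicity of $P_k(\cdot)$ with $\gamma_N\ge\gamma_{N-1}$ — hence $P_N(\gamma_{N-1})\to\Pi_\infty$, completing part~1; I expect this monotonicity to be the main technical obstacle. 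For part~2, from $P_{N+1}(\gamma_N)=\mathrm{Ric}_{\gamma_N}(P_N(\gamma_N))\preceq\mathrm{Ric}_{\gamma_{N-1}}(P_N(\gamma_{N-1}))$ (using $P_N(\gamma_N)\preceq P_N(\gamma_{N-1})$ and monotonicity of $\mathrm{Ric}_\gamma$ in $\Pi$ and in $\gamma$), letting $N\to\infty$ gives $\Pi_\infty\preceq\mathrm{Ric}_{\gamma_\infty}(\Pi_\infty)$, i.e. $g(\lambda^*_\infty,\Pi_\infty)\succeq0$; the strict inequality $g(\lambda^*_\infty,\Pi_\infty)>0$ records that $\Pi_\infty$ does not relax onto the Riccati manifold — $\Pi_\infty\succeq P_\infty(\gamma_\infty)$ (from $P_N(\gamma_{N-1})\succeq P_N(\gamma_\infty)$) but $\Pi_\infty\ne P_\infty(\gamma_\infty)$, because the parameter $\gamma_{N-1}$ sits at the moving left edge of the feasibility interval, where the finite-horizon value functions are arbitrarily steep (the turnpike). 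Once $g(\lambda^*_\infty,\Pi_\infty)>0$ is in hand, $(\lambda^*_\infty,\Pi_\infty)\ne(\overline\lambda,\overline\Pi)$ is immediate since $g(\overline\lambda,\overline\Pi)=0$; and $(\gamma^*_\infty,P_\infty(\gamma^*_\infty))=(\gamma_\infty,P_\infty(\gamma_\infty))$, the monotone-limit steady-state solution, is distinct from the minimizer $(\overline\lambda,\overline\Pi)$ because a degenerate system's minimizer uses a strictly smaller multiplier (the constraint $\lambda=\norm{G'\Pi G}$ being strictly slack there). Assembling the two regions gives parts~1--3; the two points I expect to fight for are the monotonicity of $\{P_N(\gamma_{N-1})\}$ on $\mathcal{X}^\infty_L(\alpha)$ together with the strict defect $g>0$, and the propagation $\mathcal{X}^\infty_{NL}(\alpha)\Leftrightarrow\gamma^*_\infty>\gamma_\infty$.
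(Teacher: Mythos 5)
Your overall route is the same as the paper's: reuse the forward-indexed machinery of \cref{infl}, and split on whether $x_0$ lies in $\mathcal{X}_L^\infty(\alpha)$ or $\mathcal{X}_{NL}^\infty(\alpha)$, i.e.\ on whether $\gamma_\infty^*=\gamma_\infty$ or $\gamma_\infty^*>\gamma_\infty$. Your treatment of $\mathcal{X}_{NL}^\infty(\alpha)$ (part 3) matches the paper and is fine: once $\gamma_\infty^*>\gamma_\infty$, the squeeze with $\gamma^-\in(\gamma_\infty,\gamma_\infty^*)$, the passage to the limit in \eqref{reverse}, and the minimality argument all carry over verbatim. You also correctly spot something the paper glosses over: on $\mathcal{X}_L^\infty(\alpha)$ the interval $(\gamma_\infty,\gamma_\infty^*)$ is empty, so ``exactly as in \cref{infl}'' does not literally apply to the convergence of $\Pi_0(\lambda^*(N))=P_N(\gamma_{N-1})$ in part 1.

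The genuine gaps are the two you flag yourself, and they are not minor. First, the proposed fix for part 1 on $\mathcal{X}_L^\infty(\alpha)$ --- monotonicity of $N\mapsto P_N(\gamma_{N-1})$ --- is asserted, not proved, and the two available comparisons pull in opposite directions: increasing the stage index pushes $P$ up while increasing the multiplier from $\gamma_{N-1}$ to $\gamma_N$ pushes it down, so $P_{N+1}(\gamma_N)\succeq P_N(\gamma_{N-1})$ does not follow from the monotonicity facts you cite; you need an actual argument here (or a different sandwich). Second, and more importantly, the strict defect $g(\lambda^*_\infty,\Pi_\infty)>0$ in part 2 is the substantive claim of the proposition on $\mathcal{X}_L^\infty(\alpha)$, and your justification (``the finite-horizon value functions are arbitrarily steep at the moving left edge\ldots the turnpike'') is a heuristic, not a proof --- indeed the turnpike is a \emph{consequence} of $g>0$, so the reasoning is circular as written. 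The paper closes this by contradiction: along $\mathcal{X}_L^\infty(\alpha)$ the boundary identity $\gamma^*(k)=\norm{G'P_k(\gamma^*(k))G}$ holds for all large $k$, so if $g(\lambda^*_\infty,\Pi_\infty)=0$ then $(\lambda^*_\infty,\Pi_\infty)$ would be a steady-state fixed point with the constraint $\lambda=\norm{G'\Pi G}$ active, contradicting the definition of degeneracy, which forbids any such solution. Your final sentence for part 2 (distinguishing $(\gamma_\infty^*,P_\infty(\gamma_\infty^*))$ from $(\overline{\lambda},\overline{\Pi})$ via the multipliers) is essentially this same observation and is fine, but it does not substitute for the missing proof of $g>0$.
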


\noindent\textbf{Proof sketch.}\ For degenerate systems, $x_0 \in \mathcal{X}_L^\infty(\alpha)$ implies $\lambda^*(N) = \lambda_1(N)$ for all $N$, so $\lambda^*(N)$ converges to $\lambda_\infty^* \neq \overline{\lambda}$ and $g(\lambda_\infty^*, \Pi_\infty) > 0$ identifies the turnpike. For $x_0 \in \mathcal{X}_{NL}^\infty(\alpha)$, the argument of \cref{infl} applies and $\lambda^*(N) \to \overline{\lambda}$, $\Pi_0(\lambda^*(N)) \to \overline{\Pi}$. The full proof is in the appendix (\cref{app:proofs}).

\begin{remark}[LMI solution for infinite horizon SiDAR]
From \cref{infl,infld,prop:lmigen}, the infinite horizon SiDAR reduces to the tractable LMI optimization \eqref{inflmi} for
\begin{itemize}
    \item nondegenerate systems for $x_0\in\bbR^n$.
    \item degenerate systems for $x_0\in\mathcal{X}^\infty_{NL}(\alpha)$.
\end{itemize}
For degenerate systems where $x_0\in\mathcal{X}^{\infty}_L(\alpha)$, the infinite horizon solution differs from the steady-state solution obtained by \eqref{inflmi}.
\end{remark}

\begin{remark}[Connection to standard $H_\infty$ control]
For $x_0 = 0$ and nondegenerate systems, the LMI optimization \eqref{inflmi} reduces to the standard $H_\infty$ state feedback control LMI~\citep[p.128]{caverly:forbes:2019}, which is recovered as a special case of the infinite horizon SiDAR.
\end{remark}

\begin{remark}[Turnpike behavior for degenerate systems]
\label{rem:turnpike}
For degenerate systems with $x_0 \in \mathcal{X}_L^\infty(\alpha)$, \cref{infld} establishes that $g(\lambda^*_\infty, \Pi_\infty) > 0$, indicating the backward Riccati recursion does not settle at a steady-state fixed point. Consequently, for sufficiently large horizon $N$, the sequence $\Pi_k(\lambda^*)$ from recursion \eqref{signalrec1} exhibits a turnpike: a long interior plateau near $\Pi_\infty$ with short boundary layers at $k=0$ and $k=N$. This turnpike persists as $N \to \infty$ and is illustrated numerically in \cref{sec:num-example} (see \cref{fig:pirecursion}).
\end{remark}


\begin{remark}[Infinite horizon implementation via LMI]
\label{rem:receding-horizon}
At each stage $k$ with current state $x_k$ and remaining budget $b_k = \alpha - \sum_{j=0}^{k-1}|w_j|^2$, the infinite horizon steady-state LMI \eqref{inflmi} can be resolved by replacing $x_0$ with $x_k$ and $\alpha$ with $b_k$. This yields the optimal multiplier $\lambda^*(x_k, b_k)$ for the infinite horizon problem from the current state, from which the gain $K(\lambda^*(x_k, b_k))$ is computed. The optimal control policy requires resolving the steady-state LMI at each stage, as the realized state and disturbance history determine the current state and remaining budget for the infinite horizon optimization.
\end{remark}

\section{Numerical Examples}
\label{sec:num-example}
The following examples illustrate the convergence properties and steady-state behavior of the SiDAR. All analytical results hold for arbitrary dimension $n$; the scalar cases facilitate visualization. Throughout this section, we fix $\alpha = 1$, unless otherwise stated, and simplify the notation $\mathcal{X}_L(\alpha, N)$ to $\mathcal{X}_L(N)$.

\subsection{Degenerate vs Nondegenerate Systems: Scalar Cases}
Consider three scalar systems
\begin{alignat*}{7}
&\text{System 1.}\; \; & A&=1 &\; \; B&=1 &\; \; G&=1 &\; \; R&=1 &\; \; Q&=1 &\; \; & \text{\citep[p.\,92]{basar:bernhard:1995}}\\
&\text{System 2.}\; \; & A&=0.5 &\; \; B&=1 &\; \; G&=1 &\; \; R&=1 &\; \; Q&=0.2 &\; \; & \text{\citep[p.\,92]{basar:bernhard:1995}}\\
&\text{System 3.}\; \; & A&=0.5 &\; \; B&=0 &\; \; G&=1 &\; \; R&=1 &\; \; Q&=0.2 &\; \; &
\end{alignat*}
with terminal penalties $P_f = 1.8$ for system~1 and $P_f = 0.25$ for systems~2 and~3 (system~3 uses $P_f = 0.1$ in Figure~\ref{fig:xl}). The steady-state solutions to \eqref{eq:ss-problem} with $x_0=0$ yield
\begin{alignat*}{5}
&\text{System 1.}\; \; & \overline{\lambda}&=2 &\; \; \overline{\Pi}&=2 &\; \; \overline{\lambda}&=\norm{G'\overline{\Pi} G} &\; \; \; \; \overline{\mathcal{X}}_L&=[-2, 2]\\
&\text{System 2.}\; \; & \overline{\lambda}&=0.444 &\; \; \overline{\Pi}&=0.4 &\; \; \overline{\lambda}&>\norm{G'\overline{\Pi} G} &\; \; \; \; \overline{\mathcal{X}}_L&=\varnothing\\
&\text{System 3.}\; \; & \overline{\lambda}&=0.8 &\; \; \overline{\Pi}&=0.4 &\; \; \overline{\lambda}&>\norm{G'\overline{\Pi} G} &\; \; \; \; \overline{\mathcal{X}}_L&=\varnothing
\end{alignat*}
System~1 is nondegenerate; systems~2 and~3 are degenerate. System~3 violates Assumption~\ref{asst2} since $\mathcal{R}(G) = \mathbb{R} \not\subseteq \{0\} = \mathcal{R}(B)$, hence the theoretical results do not apply.

\begin{remark}
Systems with $\overline{\lambda} > \norm{G'\overline{\Pi}G}$ in the steady-state solution \eqref{eq:ss-problem} are degenerate because the steady-state feasible set $\overline{\mathcal{X}}_L$ is empty. However, the finite horizon solution satisfies $\mathcal{X}_L(N) \supseteq \{0\}$ for all $N$ by Proposition 15 in Mannini and Rawlings~\citep{mannini:rawlings:2026a}, thus the finite horizon regions cannot converge to an empty set.
\end{remark}

\Cref{fig:xl} depicts the $\mathcal{X}_L(N)$ regions for the SiDAR \eqref{signaldp-w} for systems~1–3 as functions of horizon length $N$ with $x_0=0$. System~1 exhibits rapid convergence to a steady-state region containing an open ball around the origin. For the degenerate system~2, the regions contract as $N$ increases but never become empty in finite time, consistent with Proposition 15 in Mannini and Rawlings~\citep{mannini:rawlings:2026a}, which guarantees $0 \in \mathcal{X}_L(N)$ for all $N$. System~3 exhibits $\mathcal{X}_L(N) = \{0\}$ for all horizon lengths, illustrating the pathological behavior when Assumption~\ref{asst2} fails.

\begin{figure}
\centering
\includegraphics[width=0.8\linewidth]{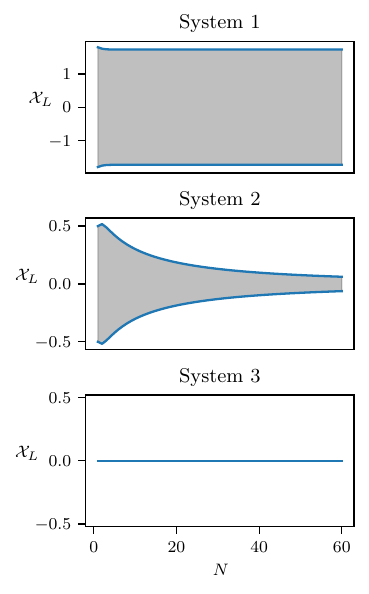}
\caption{$\mathcal{X}_L(N)$ regions versus horizon length $N$ for systems~1–3: nondegenerate system~1 (top), degenerate system~2 (middle), system~3 violating Assumption 2 (bottom).}
\label{fig:xl}
\end{figure}

\Cref{fig:pirecursion} shows the optimal recursion $\Pi_{k}(N)$ for systems~1–3 with varying horizon lengths $N \in \{50, 100, 150, 250\}$. The left panels report $x_0=0\in\mathcal{X}_L(N)$; the right panels report $x_0=2\in\mathcal{X}_{NL}(N)$. System~1 converges rapidly to the steady-state value $\overline{\Pi}=2$ for all $k$ and both initial conditions, consistent with \Cref{infl}. For the degenerate system~2, a turnpike is visible for $x_0=0\in\mathcal{X}_L(N)$: the recursion approaches but does not settle at a fixed point, exhibiting a long interior plateau with boundary layers at $k=0$ and $k=N$, as described in \Cref{rem:turnpike}. No turnpike is observed when $x_0=2\in\mathcal{X}_{NL}(N)$, where convergence to the steady-state holds by \Cref{infld}. System~3 illustrates breakdown of monotonicity in the final step from $k=1$ to $k=0$ for $x_0=0$, violating the theory due to failure of Assumption~\ref{asst2}.

\begin{figure}
\centering
\includegraphics[width=1.05\linewidth]{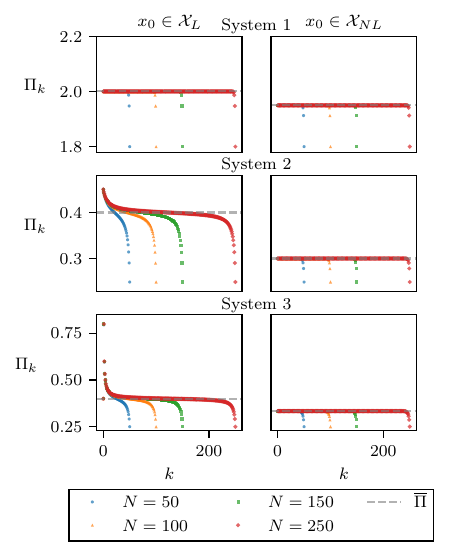}
\caption{Recursion $\Pi_{k}(N)$ versus stage $k$ for systems~1–3 with varying horizon lengths. Left: $x_0=0$. Right: $x_0=2\in\mathcal{X}_{NL}(N)$. Top: nondegenerate system~1; Middle: degenerate system~2; Bottom: system~3 violating Assumption~\ref{asst2}. Dashed lines indicate steady-state value $\overline{\Pi}$.}
\label{fig:pirecursion}
\end{figure}

\Cref{fig:lambdahorizon} shows the optimal solution $\lambda^*(N)$ as a function of horizon length $N$. The left column displays results for $x_0 =0\in \mathcal{X}_L$ and the right column for $x_0=2 \in \mathcal{X}_{NL}$. System~1 exhibits rapid convergence to the steady-state value $\overline{\lambda}$ for both initial conditions, as expected from \Cref{infl}. For the degenerate system~2, convergence to $\overline{\lambda}$ is guaranteed by \Cref{infld} for $x_0 =2\in \mathcal{X}_{NL}$; convergence for $x_0 =0\in \mathcal{X}_L$ is observed empirically but not guaranteed by the theory since the limit $(\lambda^*_\infty, \Pi_\infty)$ differs from the steady-state solution $(\overline{\lambda}, \overline{\Pi})$. System~3 converges empirically for both initial conditions, but this convergence is not guaranteed since Assumption~\ref{asst2} fails.

\begin{figure}
\centering
\includegraphics[width=1.02\linewidth]{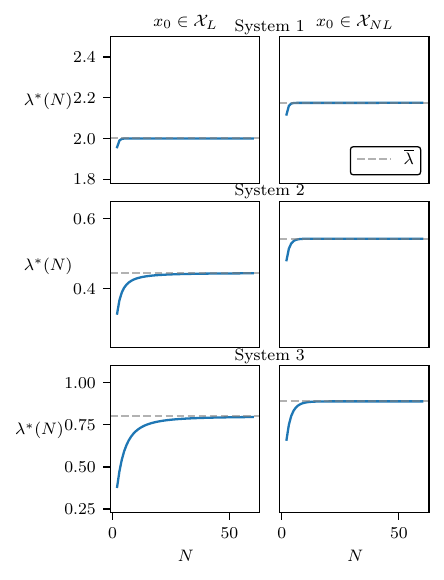}
\caption{Optimal solution $\lambda^*(N)$ versus horizon length $N$ for systems~1–3. Left column: $x_0 \in \mathcal{X}_L$. Right column: $x_0 \in \mathcal{X}_{NL}$. Rows: nondegenerate system~1 (top), degenerate system~2 (middle), system~3 violating Assumption~\ref{asst2} (bottom). Dashed lines indicate steady-state value $\overline{\lambda}$.}
\label{fig:lambdahorizon}
\end{figure}

\subsection{Degenerate vs Nondegenerate Systems: Multivariable Cases}
Consider two $2 \times 2$ systems
\begin{alignat}{4}
&\text{System 4:}\quad
& A&=\begin{bmatrix} 1 & 2 \\ 0 & 1 \end{bmatrix}
&\quad B&=\begin{bmatrix} 1 & 0 \\ 1 & 1 \end{bmatrix}
&\quad G&=Q=R=I \\
&\text{System 5:}\quad
& A&=\begin{bmatrix} 1 & 1 \\ 0 & 0.5 \end{bmatrix}
&\quad B&=\begin{bmatrix} 1 & 0 \\ 0 & 0.25 \end{bmatrix}
&\quad G&=Q=R=I
\end{alignat}
with terminal penalty $P_f = 0.01\,I$. System~4 is nondegenerate; system~5 is degenerate.

\Cref{fig:2dregions} illustrates the $\mathcal{X}_L(N)$ regions for horizon lengths $N \in \{3, 4, 10, 25\}$. The nondegenerate system rapidly converges to a fixed ellipse $\overline{\mathcal{X}}_L$ containing an open neighborhood of the origin, consistent with \Cref{infl}. For the degenerate system, the elliptical regions progressively contract as $N$ increases, asymptotically approaching a lower-dimensional set consistent with $\mathcal{X}^\infty_L$ from \Cref{prop:limit-regions}, while maintaining the origin in their interior for all finite $N$.

\begin{figure}
\centering
\includegraphics[width=0.95\linewidth]{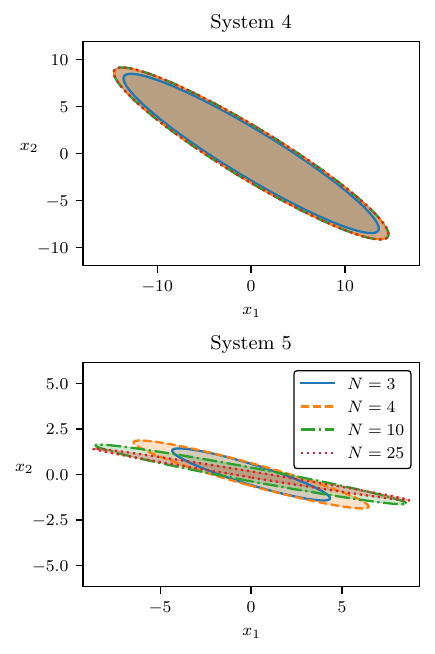}
\caption{$\mathcal{X}_L(N)$ regions for $2 \times 2$ systems with varying horizon lengths: nondegenerate system~4 (top), degenerate system~5 (bottom).}
\label{fig:2dregions}
\end{figure}

\subsection{Computational Complexity}
The LMI optimization \eqref{inflmi} from \Cref{prop:lmigen} was benchmarked across dimensions $n = m \in \{2, 6, 10, \ldots, 78\}$ using randomly generated stable nondegenerate systems. For each dimension, four independent samples were solved using MOSEK~\citep{mosek:2025}. The empirical complexity analysis reveals $O(n) \sim n^{3}$, consistent with interior point methods for semidefinite programming, confirming the computational tractability of the infinite horizon SiDAR via LMI.

\begin{remark}[Finite horizon scaling]
\label{rem:finite-scaling}
A complementary scaling study for the finite horizon SiDAR online algorithm is reported in \citet[Section~5.2]{mannini:rawlings:2026a}, where execution time is measured at state dimensions $n \in \{1, 2, 5, 10, 20, 50, 100, 200, 300\}$ for three combinations of $(m, q)$, including $(n, 1)$ and $(\lceil n/2 \rceil, \lceil n/2 \rceil)$.
\end{remark}

\section{Summary}
\label{sec:end}
This paper establishes convergence and steady-state properties for the SiDAR. Building on the finite horizon recursive solution from \citep{mannini:rawlings:2026a}, we introduced the steady-state SiDAR and derived its tractable LMI.

We classify linear systems as degenerate or nondegenerate based on whether the steady-state constraint $\lambda = \norm{G'\Pi G}$ is active at the origin. For nondegenerate systems, the finite horizon solution converges to the steady-state solution for all states as the horizon approaches infinity. For degenerate systems, convergence to the steady-state solution holds in one region of the state space ($\mathcal{X}_{NL}^\infty$), while a turnpike arises in the complementary region ($\mathcal{X}_L^\infty$).

When convergence holds, the infinite horizon SiDAR reduces to a tractable LMI optimization with $O(n^3)$ complexity; the optimal multiplier and control gain are then obtained directly from the LMI solution. At the origin, this recovers the standard $H_{\infty}$ state feedback control.

\Cref{tab:summary} summarizes the key convergence properties for nondegenerate and degenerate systems.

\begin{table*}
\refstepcounter{table}
\centering
\textbf{Table~\thetable.} Convergence properties of the finite horizon SiDAR as $N \to \infty$
\label{tab:summary}
\vspace{1ex}
\begin{tabular}{lcc}
\toprule
System & Region $\mathcal{X}_L^\infty$ & Region $\mathcal{X}_{NL}^\infty$ \\
\midrule
Nondegenerate & Converges to steady-state & Converges to steady-state \\
Degenerate    & Turnpike behavior & Converges to steady-state \\
\bottomrule
\end{tabular}
\end{table*}

\bibliographystyle{abbrvnat}
\bibliography{paper-ib_arxiv}

\section{Appendix}
\label{sec:props}

In this appendix, we compile the fundamental results used throughout this paper.

The following lemma establishes equivalent forms of the Riccati recursion, expressing the value matrix $\Pi(\lambda)$ in terms of closed-loop quantities. This representation is used in the monotonicity analysis of \Cref{prop:ndsignal}.

\begin{lemma}[Riccati equalities]
\label{receq}
The equality 
\begin{align*}
\Pi(\lambda) &= Q+A'\Pi A- A' \Pi \begin{bmatrix} B & G \end{bmatrix} \\
&\quad \begin{bmatrix} B'\Pi B + R & B'\Pi G \\ (B'\Pi G)' & G'\Pi G - \lambda I  \end{bmatrix}^{\dagger}
    \begin{bmatrix} B' \\ G'\end{bmatrix}\Pi A
\end{align*}
    can be rewritten as
\begin{equation}
            \Pi = \bar{Q} + \bar{A}' \Pi\bar{A} - \bar{A}'\Pi G(G'\Pi G-\lambda I)^{\dagger}G'\Pi\bar{A} \label{xlrec}
        \end{equation}
    where $\bar{A} = A+BK$ and $\bar{Q} = Q + K'RK$ and $K$ satisfies
    \begin{equation}
  \begin{bmatrix}
        B'\Pi B+R & B'\Pi G \\
        G'\Pi B & G'\Pi G-\lambda I
    \end{bmatrix}\begin{bmatrix}
K \\ J
\end{bmatrix} =-\begin{bmatrix}
B'\Pi A \\ G'\Pi A
\end{bmatrix} \label{JK}
\end{equation}
\end{lemma}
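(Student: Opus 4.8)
The claim is a purely algebraic equivalence: the ``one-step'' form of the Riccati map, in which the $(m+q)\times(m+q)$ block matrix $M$ is inverted jointly in the $u$ and $w$ channels, equals its ``closed-loop'' form, in which the $u$ channel has been eliminated into $\bar A$ and $\bar Q$ and only the $w$ channel is left as an explicit Schur correction. I would prove it by completing the square, for fixed $x$, in the auxiliary quadratic
\[
\psi(u,w)\eqbyd \tfrac12 x'Qx+\tfrac12 u'Ru+\tfrac12\,(Ax+Bu+Gw)'\Pi(Ax+Bu+Gw)-\tfrac{\lambda}{2}\,w'w .
\]
Its stationarity conditions $\nabla_u\psi=0$, $\nabla_w\psi=0$ read $M\begin{bmatrix}u\\ w\end{bmatrix}=-\begin{bmatrix}B'\Pi A\\ G'\Pi A\end{bmatrix}x$, so every stationary point is $(u,w)=(-Kx,-Jx)$ for a solution $(K,J)$ of \eqref{JK}; this is unique when $M$ is invertible, which under Assumption~\ref{asst2} holds for $\lambda\ge\norm{G'\Pi G}$ (the only regime used in the body of the paper), while in general one needs only $\begin{bmatrix}B'\Pi A\\ G'\Pi A\end{bmatrix}\in\mathcal R(M)$, i.e.\ solvability of \eqref{JK}.

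Completing the square in $(u,w)$ jointly, and using $MM^{\dagger}M=M$, symmetry of $M$, and $\begin{bmatrix}B&G\end{bmatrix}M^{\dagger}\begin{bmatrix}B'\\ G'\end{bmatrix}\Pi A=\begin{bmatrix}B&G\end{bmatrix}\begin{bmatrix}K\\ J\end{bmatrix}$, the stationary value $\psi(-Kx,-Jx)$ is a quadratic form in $x$ whose matrix is exactly the right-hand side of the first (one-step) equality in \cref{receq}.

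Next I would evaluate the same value by completing the square sequentially. Substituting $u=-Kx$ folds $\tfrac12 u'Ru+\tfrac12 x'Qx$ into $\tfrac12 x'\bar Q x$ with $\bar Q=Q+K'RK$ and turns the propagation term into $(\bar A x+Gw)'\Pi(\bar A x+Gw)$ with $\bar A=A+BK$, the first block-row of \eqref{JK} being precisely the statement that no linear-in-$u$ remainder survives this substitution. Completing the square in $w$ then leaves a quadratic form in $x$ whose matrix is the right-hand side of \eqref{xlrec}, because the second block-row of \eqref{JK} reads $(G'\Pi G-\lambda I)J=G'\Pi\bar A$ modulo $\mathcal N(G'\Pi G-\lambda I)$, which simultaneously places the $w$-stationary point at $w=-Jx$ and collapses the cross product into the Schur term $\bar A'\Pi G(G'\Pi G-\lambda I)^{\dagger}G'\Pi\bar A$. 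Since the two evaluations agree for every $x\in\bbR^n$, the two right-hand-side matrices in \cref{receq} coincide, so ``$\Pi$ equals the first expression'' holds if and only if ``$\Pi$ equals the second expression,'' which is \eqref{xlrec}.

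The routine parts are the two ``no residual linear term'' checks and the telescoping pseudoinverse algebra in the $w$-completion. The step I expect to require the most care is keeping the $B'\Pi G$ and $G'\Pi B$ cross terms consistent between the joint and the sequential reductions---i.e.\ verifying that the Schur complement $M/(G'\Pi G-\lambda I)$ obtained by eliminating $w$ first coincides with what remains after eliminating $u$ first---since this is where both block-rows of \eqref{JK} and the symmetry $(B'\Pi G)'=G'\Pi B$ enter essentially. Should the completion-of-squares route prove cumbersome, a fully mechanical alternative is to expand $\bar Q+\bar A'\Pi\bar A-\bar A'\Pi G(G'\Pi G-\lambda I)^{\dagger}G'\Pi\bar A$ directly, substitute both block-rows of \eqref{JK} to eliminate $B'\Pi A$ and $G'\Pi A$, and check term by term that the difference from $Q+A'\Pi A-\bigl(A'\Pi BK+A'\Pi GJ\bigr)$ vanishes.
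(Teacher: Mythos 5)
Your proposal is correct, but it reaches \cref{receq} by a genuinely different route from the paper. The paper works directly on the matrix expression: it inserts $M^{\dagger}MM^{\dagger}=M^{\dagger}$ to rewrite the correction term as $\begin{bmatrix}K' & J'\end{bmatrix}M\begin{bmatrix}K\\ J\end{bmatrix}$ (checking that nullspace components of the solution of \eqref{JK} drop out of the quadratic form), expands, and then substitutes the block rows of \eqref{JK} to eliminate $J$ and produce the $(G'\Pi G-\lambda I)^{\dagger}$ Schur term, again discarding $\mathcal{N}(G'\Pi G-\lambda I)$ contributions. You instead evaluate the saddle quadratic $\psi(u,w)$ at its stationary point in two ways, jointly (giving the one-step matrix) and sequentially in $u$ then $w$ (giving the closed-loop matrix), and conclude by equality of the two quadratic forms for all $x$. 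At bottom both arguments use the same block-row substitutions, but your variational framing explains why the identity holds and folds the pseudoinverse/nullspace bookkeeping into the standard fact about stationary values of (possibly indefinite) quadratics, at the cost of having to state the solvability (range) condition for \eqref{JK} explicitly; the paper's version is self-contained matrix algebra. One detail to fix when writing it up: with \eqref{JK} as displayed (right-hand side $+\begin{bmatrix}B'\Pi A\\ G'\Pi A\end{bmatrix}$), your stationary point $(u,w)=(-Kx,-Jx)$ gives the closed loop $A-BK$, so your sequential reduction literally yields \eqref{xlrec} with $\bar A=A-BK$; the statement's $\bar A=A+BK$ corresponds to the sign-flipped convention $M\begin{bmatrix}K\\ J\end{bmatrix}=-\begin{bmatrix}B'\Pi A\\ G'\Pi A\end{bmatrix}$, which is the convention the paper's own proof silently uses. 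This is sign bookkeeping (an inconsistency already present between the lemma statement and the paper's proof), not a gap in your argument.
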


\begin{proof}
From $M^\dagger M M^\dagger=M^ \dagger$ we have
 \begin{align}
         \Pi(\lambda) &=Q+A'\Pi A- A' \Pi \begin{bmatrix} B & G \end{bmatrix}
     M(\lambda)^{\dagger}
    \begin{bmatrix} B' \\ G'\end{bmatrix}\Pi A  \label{recjk} \\ &= Q+A'\Pi A- A' \Pi \begin{bmatrix} B & G \end{bmatrix}
     M(\lambda)^{\dagger}M(\lambda)M(\lambda)^{\dagger}
    \begin{bmatrix} B' \\ G'\end{bmatrix}\Pi A \nonumber
    \end{align}
where
\[
M(\lambda) = \begin{bmatrix} B'\Pi B + R & B'\Pi G \\ (B'\Pi G)' & G'\Pi G - \lambda I  \end{bmatrix}
\]
Define
\[
\begin{bmatrix}
        B'\Pi B+R & B'\Pi G \\
        G'\Pi B & G'\Pi G-\lambda I
    \end{bmatrix}\begin{bmatrix}
K \\ J
\end{bmatrix} =-\begin{bmatrix}
B'\Pi A \\ G'\Pi A
\end{bmatrix}
\]
or equivalently, with $b = \begin{bmatrix} B'\Pi A \\ G'\Pi A \end{bmatrix}$
    \begin{equation}
        \begin{bmatrix}
K \\ J
\end{bmatrix} =-M(\lambda)^{\dagger}b+\mathcal{N}(M(\lambda)) \label{JK2}
    \end{equation}
For any $v \in \mathcal{N}(M(\lambda))$, we have $M(\lambda)v = 0$, which gives $v'M(\lambda)v = 0$ and $(M(\lambda)^{\dagger}b)'M(\lambda)v = b'M(\lambda)^{\dagger}M(\lambda)v = 0$. Therefore, when substituting $\begin{bmatrix} K \\ J \end{bmatrix} = -M(\lambda)^{\dagger}b + v$ into the quadratic form $\begin{bmatrix} K' & J' \end{bmatrix} M(\lambda) \begin{bmatrix} K \\ J\end{bmatrix}$, all terms involving $v$ vanish. Thus, the following expression
\begin{equation}
        \Pi(\lambda) =Q+A'\Pi A- \begin{bmatrix} K' & J' \end{bmatrix}
     M(\lambda)
    \begin{bmatrix} K \\ J\end{bmatrix} \label{kj2}
\end{equation}
is equivalent to \eqref{recjk}. Expanding \eqref{kj2}
\begin{align*}
    \Pi(\lambda) = & Q+A'\Pi A-K'B'\Pi BK -K'RK-K'B'\Pi GJ-J'G'\Pi BK\\&-J'(G'\Pi G-\lambda I)J
\end{align*}
Consider
\[
B'\Pi GJ = -B' \Pi A - (B'\Pi B + R)K
\]
and
\[
J = -(G' \Pi G-\lambda I)^{\dagger}G' \Pi (A+BK)+\mathcal{N}(G' \Pi G-\lambda I)
\]
For any $q \in \mathcal{N}(G' \Pi G-\lambda I)$, we have $(G' \Pi G-\lambda I)q = 0$, giving $q'(G' \Pi G-\lambda I)q = 0$ and $((G' \Pi G-\lambda I)^{\dagger}G' \Pi (A+BK))'(G' \Pi G-\lambda I)q = 0$. Therefore, all terms involving $q$ vanish in the quadratic form $J'(G'\Pi G-\lambda I)J$.
Thus, substituting $B'\Pi GJ$ and $J$ in $\Pi(\lambda) = Q+A'\Pi A-K'B'\Pi BK -K'RK-K'B'\Pi GJ-J'G'\Pi BK-J'(G'\Pi G-\lambda I)J$ we obtain
\begin{align*}
\Pi(\lambda) =& Q+K'RK+(A+BK)'\Pi(A+BK)\\&-(A+BK)'\Pi G(G'\Pi G-\lambda I)^{\dagger}G' \Pi(A+BK)
\end{align*}
which is \eqref{xlrec} with $\bar{A} = A+BK$ and $\bar{Q} = Q + K'RK$.
\end{proof}

The following classical result relates positive definiteness of a block matrix to its Schur complement. It is used in deriving the LMI in \Cref{lemma:lmi,lemma:lmisd}.

\begin{proposition}[Schur complement and positive definiteness]
\label{prop:schur-complement}
Let $M$ be a symmetric matrix of the form
\[
M = \begin{bmatrix}
A & B\\
B' & C
\end{bmatrix}
\]
If $C$ is invertible, then the following properties hold:
\begin{enumerate}
\item $M \succ 0$ if and only if $C \succ 0$ and $A - BC^{-1}B' \succ 0$.
\item $M \succeq 0$ if and only if $C \succ 0$ and $A - BC^{-1}B' \succeq 0$.
\end{enumerate}
\end{proposition}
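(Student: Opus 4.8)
The plan is to reduce $M$ to block-diagonal form by an explicit congruence transformation and then invoke the elementary fact that congruence by an invertible matrix preserves positive (semi)definiteness. Since $M$ is symmetric, its block $C$ is symmetric, and because $C$ is invertible, $C^{-1}$ is symmetric as well. This justifies forming the elementary block matrix
\[
T \eqbyd \begin{bmatrix} I & -BC^{-1}\\ 0 & I \end{bmatrix},
\]
which is invertible with $T' = \begin{bmatrix} I & 0\\ -C^{-1}B' & I \end{bmatrix}$ (using $C = C'$).

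First I would carry out the routine multiplication and verify $T M T' = \begin{bmatrix} A - BC^{-1}B' & 0\\ 0 & C \end{bmatrix}$, the off-diagonal blocks cancelling because $B - BC^{-1}C = 0$ and $B' - CC^{-1}B' = 0$. Next I would record the observation that for any invertible $T$ and any $y$ we have $y'(TMT')y = (T'y)'M(T'y)$, so as $y$ ranges over $\bbR^{n}\setminus\{0\}$ the vector $T'y$ ranges over $\bbR^{n}\setminus\{0\}$ as well; hence $M \succ 0 \iff TMT' \succ 0$, and likewise $M \succeq 0 \iff TMT' \succeq 0$. Finally, a symmetric block-diagonal matrix is positive (semi)definite if and only if each diagonal block is, which is immediate by splitting the quadratic form across the two coordinate blocks. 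Chaining these three steps yields part 1 directly: $M \succ 0 \iff C \succ 0$ and $A - BC^{-1}B' \succ 0$. For part 2 the same chain gives $M \succeq 0 \iff C \succeq 0$ and $A - BC^{-1}B' \succeq 0$; since $C$ is invertible, $C \succeq 0$ is equivalent to $C \succ 0$, so the statement is consistent as written.

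There is essentially no hard step here: the result is classical (see \citet[App.~A.5.5]{boyd:vandenberghe:2004}), and the only points requiring care are confirming the symmetry of $C^{-1}$ — needed so that $T'$ has the stated form and $TMT'$ is a genuine congruence — and correctly tracking that the congruence is by $T$ on the left and $T'$ on the right, so that the Schur complement $A - BC^{-1}B'$ (rather than its transpose-reflected counterpart) appears in the $(1,1)$ block.
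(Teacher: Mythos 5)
Your proposal is correct and follows essentially the same route as the paper: the congruence $T M T'$ with $T=\begin{bmatrix} I & -BC^{-1}\\ 0 & I\end{bmatrix}$ is exactly the paper's $\mathcal{T}'M\mathcal{T}$ with $\mathcal{T}=\begin{bmatrix} I & 0\\ -C^{-1}B' & I\end{bmatrix}$, followed by the same observation that a block-diagonal symmetric matrix is positive (semi)definite iff its blocks are. The extra details you supply (symmetry of $C^{-1}$, the change-of-variables argument for congruence invariance, and the remark that invertibility makes $C\succeq 0$ equivalent to $C\succ 0$) are consistent with, and slightly more explicit than, the paper's proof.
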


\begin{proof}
We establish both claims through congruence transformations.
For item 1, define the congruence transformation $$\mathcal{T} \eqbyd \begin{bmatrix} I & 0\\ -C^{-1}B' & I \end{bmatrix}$$
Then
\begin{align*}
\mathcal{T}' M \mathcal{T} &= 
\begin{bmatrix}
I & -BC^{-1}\\
0 & I
\end{bmatrix}
\begin{bmatrix}
A & B\\
B' & C
\end{bmatrix}
\begin{bmatrix}
I & 0\\
-C^{-1}B' & I
\end{bmatrix} \\
&=
\begin{bmatrix}
A - BC^{-1}B' & 0\\
0 & C
\end{bmatrix}
\end{align*}
Since congruence transformations preserve positive definiteness for symmetric matrices, we have $M \succ 0$ if and only if $\mathcal{T}' M \mathcal{T} \succ 0$. The block diagonal matrix $\mathcal{T}' M \mathcal{T}$ is positive definite if and only if each diagonal block is positive definite, which gives $M \succ 0$ if and only if $C \succ 0$ and $A - BC^{-1}B' \succ 0$.

For item 2, the same congruence transformation yields $M \succeq 0$ if and only if $\mathcal{T}' M \mathcal{T} \succeq 0$, which holds if and only if $C \succeq 0$ and $A - BC^{-1}B' \succeq 0$. Since $C$ is assumed invertible, $C \succeq 0$ already implies $C \succ 0$.
\end{proof}

\begin{remark}
An analogous result holds using the Schur complement of $A$ instead of $C$. When $A$ is invertible, $M \succ 0$ if and only if $A \succ 0$ and $C - B'A^{-1}B \succ 0$.
\end{remark}

The following lemma establishes the equivalence between the Riccati inequality and its LMI representation under strict inequality. This result underlies the tractable optimization of the infinite horizon SiDAR in \Cref{prop:lmigen}.

\begin{lemma}[LMI equivalence to Riccati inequality I]
   \label{lemma:lmi}
   Assume $G'\Pi G - \lambda I \prec0$, $\Pi\succ0$, $\Pi - \bar{Q} - \bar{A}'\Pi\bar{A} \succ0$, and $\lambda>0$. The following linear matrix inequality (LMI)
\begin{equation*}
\begin{gathered}
   \begin{bmatrix}
      P & (AP - BF)' & 0 & P\hat{Q}' - F'\hat{R}'\\
      AP - BF & P & G & 0 \\
      0 & G' & \lambda I & 0\\
      (P\hat{Q}' - F'\hat{R}')'  & 0 & 0 & I
   \end{bmatrix}
   \succ 0 \\
   K = -F P^{-1} \quad P = \Pi^{-1}
\end{gathered}
\end{equation*}
   holds if and only if
   \begin{equation*}
   \Pi - \bar{Q} - \bar{A}'\Pi\bar{A} - \bar{A}'\Pi G (\lambda I - G'\Pi G)^{-1}G'\Pi\bar{A} \succ 0
   \end{equation*}
   where $\bar{A} = A+BK$, $\bar{Q} = Q + K'RK$, $\hat{Q}'=\begin{bmatrix}
       Q^{1/2}&
       0
     \end{bmatrix}$, and $
     \hat{R}'
     =
     \begin{bmatrix}
       0&
       R^{1/2}
     \end{bmatrix}
   $.
\end{lemma}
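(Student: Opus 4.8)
The plan is to reduce the $4\times4$ LMI to the scalar‑parameter Riccati inequality through a short chain of reversible Schur complements together with one congruence. The preprocessing step is to put the data blocks in closed‑loop form using $F=-KP$ and $P=\Pi^{-1}$: then $AP-BF=(A+BK)P=\bar{A}P$, and, since $\hat Q'=[\,Q^{1/2}\ 0\,]$ and $\hat R'=[\,0\ R^{1/2}\,]$, one gets $(P\hat Q'-F'\hat R')(P\hat Q'-F'\hat R')'=PQP+F'RF=P(Q+K'RK)P=P\bar{Q}P$. The $(4,4)$ block of the LMI is $I\succ0$, so the first Schur complement is an equivalence: it replaces the $(1,1)$ block $P$ by $P-P\bar{Q}P$ and leaves the blocks containing $\bar{A}P$, $G$, and $\lambda I$ unchanged.

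Next, because $\lambda>0$ the $(3,3)$ block $\lambda I$ is positive definite, so a second Schur complement is again an equivalence and subtracts $\tfrac1\lambda GG'$ from the $(2,2)$ block. Applying the congruence $\operatorname{diag}(\Pi,\Pi)$ (legitimate since $\Pi\succ0$) and using $\Pi P=I$ shows the LMI is equivalent to positive definiteness of the $2\times2$ block matrix with $(1,1)$ entry $\Pi-\bar{Q}$, $(1,2)$ entry $\bar{A}'\Pi$, and $(2,2)$ entry $\Pi-\tfrac1\lambda\Pi GG'\Pi$.

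The last Schur complement, with respect to the $(2,2)$ block $\Pi-\tfrac1\lambda\Pi GG'\Pi$, is the crux. I would first show this block is positive definite precisely under the hypothesis $G'\Pi G-\lambda I\prec0$: by the push‑through identity $\tfrac1\lambda\Pi^{1/2}GG'\Pi^{1/2}$ and $\tfrac1\lambda G'\Pi G$ have the same nonzero eigenvalues, so $G'\Pi G\prec\lambda I$ forces $I-\tfrac1\lambda\Pi^{1/2}GG'\Pi^{1/2}\succ0$, hence $\Pi-\tfrac1\lambda\Pi GG'\Pi=\Pi^{1/2}(I-\tfrac1\lambda\Pi^{1/2}GG'\Pi^{1/2})\Pi^{1/2}\succ0$ (read conversely, the same computation shows the LMI already forces this, so the ``only if'' direction needs no extra assumption). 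Taking the Schur complement then gives $\Pi-\bar{Q}-\bar{A}'\bigl[\Pi(\Pi-\tfrac1\lambda\Pi GG'\Pi)^{-1}\Pi\bigr]\bar{A}\succ0$, and a Woodbury rearrangement identifies the bracketed matrix, $\Pi(\Pi-\tfrac1\lambda\Pi GG'\Pi)^{-1}\Pi=\Pi+\tfrac1\lambda\Pi G(I-\tfrac1\lambda G'\Pi G)^{-1}G'\Pi=\Pi-\Pi G(G'\Pi G-\lambda I)^{-1}G'\Pi$. Substituting this back and expanding recovers exactly the strict Riccati inequality of the statement, the strict form of the fixed‑point relation \eqref{xlrec}, with $\bar{A}=A+BK$ and $\bar{Q}=Q+K'RK$. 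Since each step is an equivalence under the standing hypotheses ($\Pi\succ0$, $\lambda>0$, $G'\Pi G-\lambda I\prec0$ being exactly what allow the three reductions to be run backwards in the ``if'' direction), the chain yields the claimed ``if and only if''.

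The main obstacle I expect is this final step: certifying positive definiteness of the intermediate $(2,2)$ block — the one place where $G'\Pi G\prec\lambda I$ enters essentially and where the existence of the required inverse is established — and then carrying out the Woodbury rearrangement with careful sign bookkeeping between $(\lambda I-G'\Pi G)^{-1}$ and $(G'\Pi G-\lambda I)^{-1}$, so that the $G$‑dependent correction term assembles with the right sign. A secondary nuisance is the block layout of $\hat Q,\hat R$, which enters only through the identity $(P\hat Q'-F'\hat R')(P\hat Q'-F'\hat R')'=P\bar{Q}P$; the remaining hypothesis $\Pi-\bar{Q}-\bar{A}'\Pi\bar{A}\succ0$ is the ambient regime in which the Riccati representation of \cref{receq} is invoked.
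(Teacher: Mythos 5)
Your chain of reductions is sound and fully reversible, and it is in the same spirit as the paper's proof (Schur complements plus a congruence), but the route differs in the details. The paper applies the congruence $\operatorname{diag}(\Pi,I,I,I)$ and a permutation to the full $4\times4$ matrix first, then takes three Schur complements with pivots $\Pi^{-1}$, $I$, and $\lambda I-G'\Pi G$; keeping the disturbance block intact until the last step makes the correction term $\bar{A}'\Pi G(\lambda I-G'\Pi G)^{-1}G'\Pi\bar{A}$ appear directly, with no Woodbury manipulation. You instead eliminate the $(4,4)$ and $(3,3)$ pivots on the $P$-side (using $(P\hat Q'-F'\hat R')(P\hat Q'-F'\hat R')'=P\bar{Q}P$, which is correct), pass to the $2\times2$ matrix $\bigl[\begin{smallmatrix}\Pi-\bar{Q} & \bar{A}'\Pi\\ \Pi\bar{A} & \Pi-\tfrac1\lambda\Pi GG'\Pi\end{smallmatrix}\bigr]$ via $\operatorname{diag}(\Pi,\Pi)$, and then need the push-through/Woodbury identity $\Pi(\Pi-\tfrac1\lambda\Pi GG'\Pi)^{-1}\Pi=\Pi+\Pi G(\lambda I-G'\Pi G)^{-1}G'\Pi$ to reassemble the $G$-dependent term; your eigenvalue argument that the $(2,2)$ pivot is positive definite exactly when $G'\Pi G\prec\lambda I$ (and that the LMI itself forces this in the ``only if'' direction) plays the role of the paper's $(2,2)$ block $\lambda I-G'\Pi G$. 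The cost of your ordering is the extra inversion identity with its sign bookkeeping; the benefit is that the hypothesis $G'\Pi G-\lambda I\prec0$ enters at exactly one visible place.

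One point needs correcting, though it is as much the paper's issue as yours: carried out carefully, your substitution yields
\begin{equation*}
\Pi-\bar{Q}-\bar{A}'\Pi\bar{A}-\bar{A}'\Pi G(\lambda I-G'\Pi G)^{-1}G'\Pi\bar{A}\succ0,
\quad\text{i.e.}\quad
\Pi-\bar{Q}-\bar{A}'\Pi\bar{A}+\bar{A}'\Pi G(G'\Pi G-\lambda I)^{-1}G'\Pi\bar{A}\succ0,
\end{equation*}
which is the paper's penultimate display but is \emph{not} literally the inequality stated in the lemma: the statement carries a minus sign in front of $\bar{A}'\Pi G(G'\Pi G-\lambda I)^{-1}G'\Pi\bar{A}$, which originates from a sign slip in the paper's final substitution $(\lambda I-G'\Pi G)^{-1}=-(G'\Pi G-\lambda I)^{-1}$. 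So your claim of recovering the statement \emph{exactly} should be tempered: your algebra gives the correct LMI-equivalent condition, whereas the inequality as printed is implied outright by the assumptions $\Pi-\bar{Q}-\bar{A}'\Pi\bar{A}\succ0$ and $G'\Pi G-\lambda I\prec0$ (its last term is positive semidefinite) and therefore cannot be the two-sided equivalent of the LMI. Stating your conclusion in the $(\lambda I-G'\Pi G)^{-1}$ form, and flagging the sign discrepancy with the lemma, would make your write-up airtight.
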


\begin{proof}
Define $K \eqbyd -F P^{-1}$, $\bar{A} \eqbyd A + BK$, $C \eqbyd \begin{bmatrix} Q^{1/2}\\ R^{1/2}K\end{bmatrix}$, and $\bar{Q} \eqbyd C'C = Q + K'RK$, where
\[
\hat{Q}' \eqbyd \begin{bmatrix} Q^{1/2} & 0 \end{bmatrix} \qquad \hat{R}' \eqbyd \begin{bmatrix} 0 & R^{1/2} \end{bmatrix}
\]
Substituting these definitions, the LMI becomes
\[
\begin{bmatrix}
P & P \bar{A}' & 0 & P C' \\
\bar{A} P & P & G & 0 \\
0 & G' & \lambda I & 0 \\
C P & 0 & 0 & I
\end{bmatrix} \succ 0
\]
Define $\Pi \eqbyd P^{-1} \succ 0$ and apply the congruence transformation $\mathcal{T} \eqbyd \operatorname{diag}(\Pi, I, I, I)$. Since congruence transformations preserve positive definiteness bidirectionally, we obtain
\[
\begin{bmatrix}
\Pi & \bar{A}' & 0 & C' \\
\bar{A} & \Pi^{-1} & G & 0 \\
0 & G' & \lambda I & 0 \\
C & 0 & 0 & I
\end{bmatrix} \succ 0
\]
Applying a permutation (reordering rows and columns), which preserves positive definiteness bidirectionally, yields
\[
\begin{bmatrix}
\Pi^{-1} & \bar{A} & G & 0 \\
\bar{A}' & \Pi & 0 & C' \\
G' & 0 & \lambda I & 0 \\
0 & C & 0 & I
\end{bmatrix} \succ 0
\]
Since $\Pi^{-1} \succ 0$ and by \cref{prop:schur-complement} item 1, this holds if and only if the Schur complement with respect to the $(1,1)$ block is positive definite
\[
\begin{bmatrix}
\Pi - \bar{A}' \Pi \bar{A} & -\bar{A}' \Pi G & C' \\
-G' \Pi \bar{A} & \lambda I - G' \Pi G & 0 \\
C & 0 & I
\end{bmatrix} \succ 0
\]
Since $I \succ 0$ and by \cref{prop:schur-complement} item 1, this holds if and only if the Schur complement with respect to the $(3,3)$ block is positive definite
\[
\begin{bmatrix}
\Pi - \bar{Q} - \bar{A}'\Pi\bar{A} & -\bar{A}' \Pi G \\
-G'\Pi \bar{A} & \lambda I - G'\Pi G
\end{bmatrix} \succ 0
\]
where we used $\bar{Q} = C'C$. Since $\lambda I - G'\Pi G \succ 0$ by assumption and by \cref{prop:schur-complement} item 1, this holds if and only if the Schur complement with respect to the $(2,2)$ block is positive definite
\[
(\Pi - \bar{Q} - \bar{A}'\Pi\bar{A}) - (-\bar{A}'\Pi G)(\lambda I - G'\Pi G)^{-1}(-\bar{A}'\Pi G)' \succ 0
\]
which is well-defined since $\lambda I-G'\Pi G\succ0$. Simplifying yields the equivalent form
\[
\Pi - \bar{Q} - \bar{A}'\Pi\bar{A} - \bar{A}'\Pi G(\lambda I - G'\Pi G)^{-1}G'\Pi\bar{A} \succ 0
\]
where the chain of equivalences is reversible, completing the proof.
\end{proof}

The following lemma extends the LMI equivalence to positive semidefiniteness, which is required for the steady-state characterization in \Cref{sec:ss}.

\begin{lemma}[LMI equivalence to Riccati inequality II]
   \label{lemma:lmisd}
   Assume $\Pi \succ0$ and $\lambda>0$. The following linear matrix inequality (LMI)
\begin{equation*}
\begin{gathered}
   \begin{bmatrix}
      P & (AP - BF)' & 0 & P\hat{Q}' - F'\hat{R}'\\
      AP - BF & P & G & 0 \\
      0 & G' & \lambda I & 0\\
      (P\hat{Q}' - F'\hat{R}')'  & 0 & 0 & I
   \end{bmatrix}
   \succeq 0 \\
   K = -F P^{-1} \quad P = \Pi^{-1}
\end{gathered}
\end{equation*}
   holds if and only if
   \begin{equation*}
   \begin{bmatrix}
       \Pi -\bar{Q}- \bar{A}'\Pi \bar{A} & -\bar{A}' \Pi G \\
       -G'\Pi \bar{A} & \lambda I-G'\Pi G
   \end{bmatrix} \succeq 0
   \end{equation*}
   where $\bar{A} = A+BK$, $\bar{Q} = Q + K'RK$, $\hat{Q}'=\begin{bmatrix}
       Q^{1/2}&
       0
     \end{bmatrix}$, and $
     \hat{R}'
     =
     \begin{bmatrix}
       0&
       R^{1/2}
     \end{bmatrix}
   $.
\end{lemma}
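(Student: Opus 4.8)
The plan is to follow the same chain of congruence transformations and Schur complements used in the proof of \cref{lemma:lmi}, but to stop one step earlier, since the positive \emph{semi}definite setting does not permit eliminating the block $\lambda I - G'\Pi G$, which may be singular.

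First I would substitute $K \eqbyd -FP^{-1}$, $\bar{A} \eqbyd A + BK$, $C' \eqbyd \begin{bmatrix} Q^{1/2} & R^{1/2}K \end{bmatrix}$, $\bar{Q} \eqbyd C'C = Q + K'RK$, and $\Pi \eqbyd P^{-1} \succ 0$, rewriting the $4\times 4$ LMI in the form with $P$, $P\bar{A}'$, $PC'$ entries. Applying the congruence $\mathcal{T} \eqbyd \operatorname{diag}(\Pi, I, I, I)$ (invertible since $\Pi \succ 0$) and a symmetric permutation of rows and columns --- both of which preserve $\succeq 0$ bidirectionally --- brings the inequality to the block form with $\Pi^{-1}$ in the $(1,1)$ position, exactly as in \cref{lemma:lmi}.

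Next I would take Schur complements twice, each time with respect to a block that is \emph{strictly} positive definite, so that \cref{prop:schur-complement} item 2 applies cleanly: first eliminate the $\Pi^{-1} \succ 0$ block, obtaining a $3\times 3$ block inequality whose trailing block is $I$; then eliminate that $I \succ 0$ block, using $\bar{Q} = C'C$. This reduces the LMI to exactly the claimed $2\times 2$ block inequality $\begin{bmatrix} \Pi - \bar{Q} - \bar{A}'\Pi\bar{A} & -\bar{A}'\Pi G \\ -G'\Pi\bar{A} & \lambda I - G'\Pi G \end{bmatrix} \succeq 0$. Since every step (congruence, permutation, Schur complement with an invertible pivot) is an equivalence, the chain is reversible, giving the ``if and only if''.

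The main subtlety --- and the reason the statement is phrased with a $2\times 2$ block rather than a single Riccati inequality as in \cref{lemma:lmi} --- is that one cannot take a further Schur complement with respect to $\lambda I - G'\Pi G$: under only $\succeq 0$ this block may be singular (precisely the boundary case $\lambda = \norm{G'\Pi G}$), so \cref{prop:schur-complement} does not apply to it and no scalar Riccati equality can be extracted. I would emphasize that the two pivots actually used, $\Pi^{-1}$ and $I$, are invertible unconditionally under the hypotheses $\Pi \succ 0$ and $\lambda > 0$, which is exactly what makes this semidefinite equivalence hold without the extra assumptions needed in \cref{lemma:lmi}.
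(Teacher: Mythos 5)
Your proposal is correct and follows essentially the same route as the paper's proof: substitute the closed-loop definitions, apply the congruence $\operatorname{diag}(\Pi,I,I,I)$ and a symmetric permutation, then take Schur complements with respect to the two unconditionally invertible pivots $\Pi^{-1}$ and $I$, stopping at the $2\times 2$ block inequality. Your remark about why no further Schur complement in $\lambda I - G'\Pi G$ is taken matches the reasoning implicit in the paper.
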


\begin{proof}
Define $K \eqbyd -F P^{-1}$, $\bar{A} \eqbyd A + BK$, $C \eqbyd \begin{bmatrix} Q^{1/2}\\ R^{1/2}K\end{bmatrix}$, and $\bar{Q} \eqbyd C'C = Q + K'RK$, where
\[
\hat{Q}' \eqbyd \begin{bmatrix} Q^{1/2} & 0 \end{bmatrix} \qquad \hat{R}' \eqbyd \begin{bmatrix} 0 & R^{1/2} \end{bmatrix}
\]
Substituting these definitions, the LMI becomes
\[
\begin{bmatrix}
P & P \bar{A}' & 0 & P C' \\
\bar{A} P & P & G & 0 \\
0 & G' & \lambda I & 0 \\
C P & 0 & 0 & I
\end{bmatrix} \succeq 0
\]
Define $\Pi \eqbyd P^{-1} \succ 0$ and apply the congruence transformation $\mathcal{T} \eqbyd \operatorname{diag}(\Pi, I, I, I)$. Since congruence transformations preserve positive semidefiniteness bidirectionally, we obtain
\[
\begin{bmatrix}
\Pi & \bar{A}' & 0 & C' \\
\bar{A} & \Pi^{-1} & G & 0 \\
0 & G' & \lambda I & 0 \\
C & 0 & 0 & I
\end{bmatrix} \succeq 0
\]
Applying a permutation (reordering rows and columns), which preserves positive semidefiniteness bidirectionally, yields
\[
\begin{bmatrix}
\Pi^{-1} & \bar{A} & G & 0 \\
\bar{A}' & \Pi & 0 & C' \\
G' & 0 & \lambda I & 0 \\
0 & C & 0 & I
\end{bmatrix} \succeq 0
\]
Since $\Pi^{-1} \succ 0$ and by \cref{prop:schur-complement} item 2, this holds if and only if the Schur complement with respect to the $(1,1)$ block is positive semidefinite
\[
\begin{bmatrix}
\Pi - \bar{A}' \Pi \bar{A} & -\bar{A}' \Pi G & C' \\
-G' \Pi \bar{A} & \lambda I - G' \Pi G & 0 \\
C & 0 & I
\end{bmatrix} \succeq 0
\]
Since $I \succ 0$ and by \cref{prop:schur-complement} item 2, this holds if and only if the Schur complement with respect to the $(3,3)$ block is positive semidefinite
\[
\begin{bmatrix}
\Pi - \bar{Q} - \bar{A}'\Pi\bar{A} & -\bar{A}' \Pi G \\
-G'\Pi \bar{A} & \lambda I - G'\Pi G
\end{bmatrix} \succeq 0
\]
where we used $\bar{Q} = C'C$ and the chain of equivalences is reversible, completing the proof.
\end{proof}

\subsection{\textbf{Proofs of Main Results}}
\label{app:proofs}

\begin{proof}[Proof of \cref{prop:lmigen}]
We establish that optimization \eqref{inflmi} implies \eqref{eq:ss-problem} by invoking strong duality and complementary slackness for the convex 4$\times$4 semidefinite program. Following \citep{balakrishnan:vandenberghe:2003}, consider the following primal and dual optimization problems.

\textbf{Primal SDP.} \newline
For all $x_0 \in \bbR^n$ and $\alpha >0$, let primal variables be $(P,F,\lambda,\chi)$ with $P \in \mathbb{S}^n$, $F \in \mathbb{R}^{m \times n}$, $\lambda,\chi \in \mathbb{R}$. Define
\begin{equation}
\begin{split}
S(P,F,\lambda) &\eqbyd \begin{bmatrix}
       P & (AP - BF)' & 0 & P\hat{Q}' - F'\hat{R}'\\
       AP - BF & P & G & 0 \\
       0 & G' & \lambda I & 0\\
       (P\hat{Q}' - F'\hat{R}')'  & 0 & 0 & I
    \end{bmatrix} \\
T(P,\chi) &\eqbyd \begin{bmatrix}
       P & \frac{x_0}{\sqrt{\alpha}}\\
       \frac{x_0'}{\sqrt{\alpha}} & \chi 
    \end{bmatrix}
\end{split}
\label{eq:primal-blocks}
\end{equation}
where $\hat{Q}' = \begin{bmatrix} Q^{1/2} & 0 \end{bmatrix}$ and $\hat{R}' = \begin{bmatrix} 0 & R^{1/2} \end{bmatrix}$. The primal problem is
\begin{equation}
\min_{P,F,\lambda,\chi} \; \lambda /2 + \chi/2 \quad \text{s.t.} \quad S(P,F,\lambda) \succeq 0 \quad T(P,\chi) \succeq 0
\label{eq:primal-sdp}
\end{equation}
which is equivalent to \eqref{inflmi}.

Using the epigraph formulation, by Schur complement (see \citet[App.~B.1 \& App.~A.5.5]{boyd:vandenberghe:2004}) the quadratic term $(1/2)(x_0/\sqrt{\alpha})'\Pi (x_0/\sqrt{\alpha}) = (1/2)(x_0/\sqrt{\alpha})'P^{-1}(x_0/\sqrt{\alpha})$ is implied by $\begin{bmatrix} P & \frac{x_0}{\sqrt{\alpha}} \\ \frac{x_0'}{\sqrt{\alpha}} & \chi \end{bmatrix} \succeq 0$. Thus, optimization \eqref{eq:primal-sdp} is equivalently expressed by
\begin{equation}
\begin{split}
\min_{\lambda,\Pi,F} &(1/2)(\frac{x_0}{\sqrt{\alpha}})'\Pi (\frac{x_0}{\sqrt{\alpha}}) + \lambda/2 \\
\text{s.t.} \quad &S(P,F,\lambda) \succeq 0 \\
&K = - F P^{-1} \quad P = \Pi^{-1}
\end{split}
\label{eq:primal-sdp2}
\end{equation}

\textbf{Dual SDP.} \newline
Define dual variables
\[
Y = \begin{bmatrix}
Y_{11} & Y_{12} & Y_{13} & Y_{14}\\
Y_{12}' & Y_{22} & Y_{23} & Y_{24}\\
Y_{13}' & Y_{23}' & Y_{33} & Y_{34}\\
Y_{14}' & Y_{24}' & Y_{34}' & Y_{44}
\end{bmatrix} \succeq 0 \qquad
V = \begin{bmatrix}
V_{11} & v\\
v' & \nu
\end{bmatrix} \succeq 0
\]
Using standard SDP duality \citep{balakrishnan:vandenberghe:2003}, the dual of the primal LMI epigraph problem~\eqref{eq:primal-sdp} is
\begin{equation}
\max_{Y,V} \; -\operatorname{Tr}(Y_{44}) - 2\operatorname{Tr}(G'Y_{23})- 2(\frac{x_0}{\sqrt{\alpha}})'v
\label{eq:dual-sdp}
\end{equation}
subject to $Y \succeq 0$, $V \succeq 0$, and the dual equalities
\begin{align*}
Y_{11} + A'Y_{12} + Y_{12}'A + Y_{22} + Y_{14}\hat{Q} + \hat{Q}'Y_{14}' + V_{11} &= 0\\
B'Y_{12} + \hat{R}'Y_{14}' &= 0 \\
\operatorname{Tr}(Y_{33}) &= \tfrac{1}{2} \\
\nu &= \tfrac{1}{2}
\end{align*}

\textbf{Strict primal feasibility.} \newline
To invoke strong duality, we establish existence of a strictly feasible primal point. Since $(A,B)$ is stabilizable by Assumption~\ref{asst1}, there exists $K$ such that $\bar{A} \eqbyd A + BK$ is Schur stable with $\rho(\bar{A}) < 1$. Define $C \eqbyd \begin{bmatrix} Q^{1/2}\\ R^{1/2}K\end{bmatrix}$ so that $\bar{Q} \eqbyd C'C = Q + K'RK$. For any $X \succ 0$ (e.g., $X = I$), the discrete Lyapunov equation $\Pi - \bar{A}'\Pi\bar{A} = \bar{Q} + X$ has unique solution $\Pi \succ 0$ since $\rho(\bar{A}) < 1$. By construction, $X = \Pi - \bar{Q} - \bar{A}'\Pi\bar{A} \succ 0$. Define $H(\lambda)\eqbyd \bar{A}'\Pi G(\lambda I-G'\Pi G)^{-1}G'\Pi \bar{A}$ for $\lambda>\norm{G'\Pi G}$. Then $H(\lambda)\succeq 0$ since $\lambda I-G'\Pi G\succ0$, $H(\lambda)$ is continuous in $\lambda$ on $(\norm{G'\Pi G},\infty)$, and $H(\lambda)\to0$ as $\lambda\to\infty$, so we choose $\lambda$ sufficiently large that $H(\lambda)\prec X$. This choice guarantees
\[
\Pi - \bar{Q} - \bar{A}'\Pi\bar{A} - \bar{A}'\Pi G (\lambda I - G'\Pi G)^{-1}G'\Pi\bar{A} \succ 0
\]
since $X\succ 0$ and $H(\lambda)\prec X$ imply $X-H(\lambda)\succ0$. By \cref{lemma:lmi}, with $P \eqbyd \Pi^{-1}$ and $F \eqbyd -KP$, this inequality holds if and only if $S(P,F,\lambda) \succ 0$. For the second constraint, choose $\chi \eqbyd (x_0/\sqrt{\alpha})'\Pi(x_0/\sqrt{\alpha}) + \varepsilon$ for any $\varepsilon > 0$. Since $P = \Pi^{-1} \succ 0$, the Schur complement of $T(P,\chi)$ yields
\begin{align*}
\chi - (x_0/\sqrt{\alpha})'P^{-1}(x_0/\sqrt{\alpha}) &= \chi - (x_0/\sqrt{\alpha})'\Pi(x_0/\sqrt{\alpha}) \\
&= \varepsilon > 0
\end{align*}
hence $T(P,\chi) \succ 0$. Therefore, $(P,F,\lambda,\chi)$ is a strictly feasible point, establishing that the primal problem \eqref{eq:primal-sdp2}, and thus problem \eqref{eq:primal-sdp}, is strictly feasible.

\textbf{Strong duality and complementary slackness.} \newline
Since we established existence of a strictly feasible primal point in the previous paragraph, Theorem~4 of \citep{balakrishnan:vandenberghe:2003} guarantees that strong duality holds. Therefore, at optimality $(P, F, \lambda, \chi)$ and $(Y, V)$ satisfy the complementary slackness conditions
\begin{equation}
\langle S(P,F,\lambda^*), Y \rangle = 0 \qquad \langle T(P,\chi), V \rangle = 0
\label{eq:cs}
\end{equation}
Define $\Pi \eqbyd (P)^{-1}$, $K \eqbyd -F(P)^{-1}$, $\bar{A} \eqbyd A + BK$, and $\bar{Q} \eqbyd Q + K'RK = C'C$ with $C \eqbyd \begin{bmatrix} Q^{1/2}\\ R^{1/2}K\end{bmatrix}$, so that $C' = \begin{bmatrix} Q^{1/2} & K'R^{1/2}\end{bmatrix}$ has compatible block dimensions for arbitrary $n$, $m$, and $q$. Apply congruence transformation $\mathcal{T} = \operatorname{diag}(\Pi, I, I, I)$ to $S(P,F,\lambda)$ to obtain
\[
\Sigma \eqbyd \mathcal{T}'S(P,F,\lambda)\mathcal{T} = \begin{bmatrix}
\Pi & \bar{A}' & 0 & C'\\
\bar{A} & \Pi^{-1} & G & 0\\
0 & G' & \lambda I & 0\\
C & 0 & 0 & I
\end{bmatrix} \succeq 0
\]
and $\Upsilon \eqbyd \mathcal{T}^{-1}Y^*(\mathcal{T}^{-1})' \succeq 0$. Since $\langle \Sigma, \Upsilon \rangle = \langle S(P,F,\lambda), Y \rangle = 0$, we have $\operatorname{Tr}(\Sigma\Upsilon) = 0$.


Define $W \eqbyd \Upsilon^{1/2}$ so that $\Upsilon = WW'$ (principal square root). Since $\Sigma \succeq 0$, $\Upsilon \succeq 0$, and $\operatorname{Tr}(\Sigma\Upsilon) = 0$, we apply the following result: if $A, B \succeq 0$ and $\operatorname{Tr}(AB) = 0$, then with $W \eqbyd B^{1/2}$ we have
\begin{align*}
\operatorname{Tr}(AB) &= \operatorname{Tr}(A^{1/2}WW'A^{1/2}) \\
&= \operatorname{Tr}((A^{1/2}W)(A^{1/2}W)') \\
&= \normf{A^{1/2}W}_F^2 = 0
\end{align*}
which implies $A^{1/2}W = 0$ and hence $AW = 0$. Applying this with $A = \Sigma$ and $B = \Upsilon$ yields
\begin{equation}
\Sigma W = 0
\label{eq:sigma-w-zero}
\end{equation}

Partition $W = [W_1; W_2; W_3; W_4]$ conformably with $\Sigma$. The four block rows of \eqref{eq:sigma-w-zero} are
\begin{align}
\Pi W_1 + \bar{A}'W_2 + C'W_4 &= 0 \label{eq:block-a}\\
\bar{A}W_1 + \Pi^{-1}W_2 + GW_3 &= 0 \label{eq:block-b}\\
G'W_2 + \lambda W_3 &= 0 \label{eq:block-c}\\
CW_1 + W_4 &= 0 \label{eq:block-d}
\end{align}
From \eqref{eq:block-d}, $W_4 = -CW_1$. From \eqref{eq:block-b}, since $\Pi^{-1} \succ 0$, we have
\[
W_2 = -\Pi(\bar{A}W_1 + GW_3)
\]
Substituting $W_2$ and $W_4$ into \eqref{eq:block-a} gives
\[
(\Pi - \bar{Q} - \bar{A}'\Pi\bar{A})W_1 - \bar{A}'\Pi GW_3 = 0
\]
and substituting $W_2$ into \eqref{eq:block-c} gives
\[
-G'\Pi\bar{A}W_1 + (\lambda I - G'\Pi G)W_3 = 0
\]
Stack these two relations as
\[
\begin{bmatrix}
\Pi - \bar{Q} - \bar{A}'\Pi\bar{A} & -\bar{A}'\Pi G\\
-G'\Pi\bar{A} & \lambda I - G'\Pi G
\end{bmatrix}
\begin{bmatrix} W_1 \\ W_3 \end{bmatrix} = 0
\]
Suppose $\Upsilon_{11} = W_1W_1' \succ 0$. Then, $W_1$ has a right inverse 
$W_1^{\dagger} \eqbyd W_1'(W_1W_1')^{-1}$. Define
\[
\tilde{J}' \eqbyd W_3 W_1^{\dagger}
\]
Multiplying the last equation on the right by $W_1^{\dagger}$ yields
\[
\begin{bmatrix}
\Pi - \bar{Q} - \bar{A}'\Pi\bar{A} & -\bar{A}'\Pi G\\
-G'\Pi\bar{A} & \lambda I - G'\Pi G
\end{bmatrix}
\begin{bmatrix} I \\ \tilde{J}' \end{bmatrix} = 0
\]
    or equivalently $\tilde{J}' = (\lambda I-G'\Pi G)^{\dagger}G'\Pi \bar{A}+\mathcal{N}(\lambda I-G'\Pi G)$, and
    \(\Pi -C'C- \bar{A}'\Pi \bar{A} -\bar{A}' \Pi G\tilde{J}'=0\). After simplification, we obtain
    \[
    \Pi = \bar{Q} + \bar{A}' \Pi\bar{A} +\tilde{J}(\lambda I-G'\Pi G)\tilde{J}'
    \]
    which can be rewritten as
    \begin{equation}
\Pi = \bar{Q} + \bar{A}'\Pi\bar{A} - \bar{A}'\Pi G(G'\Pi G - \lambda I)^{\dagger}G'\Pi\bar{A}
\label{eq:riccati-equality}
\end{equation}
By \cref{receq}, \eqref{eq:riccati-equality} can be rewritten as $g(\lambda, \Pi) = 0$ from \eqref{eq:ss-problem:constraints},
where $\bar{A} = A+BK$ and $\bar{Q} = Q + K'RK$ and $K$ satisfies
    \begin{equation*}
  \begin{bmatrix}
        B'\Pi B+R & B'\Pi G \\
        G'\Pi B & G'\Pi G-\lambda I
    \end{bmatrix}\begin{bmatrix}
K \\ J
\end{bmatrix} =-\begin{bmatrix}
B'\Pi A \\ G'\Pi A
\end{bmatrix}
\end{equation*}
Note that since $M(\lambda,\Pi)$ is invertible for $\lambda\geq \norm{G'\Pi G}$ under Assumption 2, the Moore–Penrose pseudoinverse in \cref{receq} reduces to the standard inverse.
From \cref{lemma:lmisd}, the inequality $S(P,F,\lambda) \succeq 0$ implies $\lambda I - G'\Pi G \succeq 0$, and thus $\lambda \geq \norm{G'\Pi G}$. Therefore, $(\lambda, \Pi)$ satisfies the constraints of the steady-state problem \eqref{eq:ss-problem}.

\textbf{Conclusion.} \newline
The complementary slackness analysis establishes that any optimal solution $(\lambda^*, \chi^*, P^*, F^*)$ to  \eqref{eq:primal-sdp2}, and thus \eqref{inflmi}, yields $(\lambda^*, \Pi)$ with $\Pi = (P^*)^{-1}$ satisfying $g(\lambda^*, \Pi) = 0$, $\lambda^* \geq \norm{G'\Pi G}$. Therefore, optimization \eqref{eq:primal-sdp2}, and thus \eqref{inflmi}, implies the steady-state problem \eqref{eq:ss-problem}.

\end{proof}

\begin{proof}[Proof of \cref{prop:lmi-existence}]
We establish existence by showing that every feasible level set is compact, then applying the Weierstrass theorem. Let $f(P,F,\lambda,\chi) \eqbyd \lambda /2 + \chi/2$ denote the objective function.

From the $(3,3)$ block of $S(P,F,\lambda) \succeq 0$ from \eqref{eq:primal-blocks}, we have $\lambda I \succeq 0$, hence $\lambda \geq 0$. From the Schur complement of $T(P,\chi) \succeq 0$, we have $\chi P - (x_0 x_0')/\alpha \succeq 0$, which implies $\chi \geq 0$. Therefore, on any level set $\{f \leq \beta\}$, we have $0 \leq \lambda, \chi \leq 2\beta$.

Since the $(4,4)$ block of $S(P,F,\lambda)$ is $I$, the Schur complement with respect to this block yields
\[
\begin{bmatrix}
P & (AP - BF)' & 0\\
AP - BF & P & G\\
0 & G' & \lambda I
\end{bmatrix}
\succeq
\begin{bmatrix}
P\hat Q' - F'\hat R'\\[2pt]
0\\[2pt]
0
\end{bmatrix}
\begin{bmatrix}
P\hat Q' - F'\hat R'\\[2pt]
0\\[2pt]
0
\end{bmatrix}'
\]
The $(1,1)$ block of this inequality gives
\begin{equation}
P \succeq (P\hat{Q}' - F'\hat{R}')(P\hat{Q}' - F'\hat{R}')'
\label{eq:schur-11}
\end{equation}
Since $\hat{Q}' = \begin{bmatrix} Q^{1/2} & 0 \end{bmatrix}$ and $\hat{R}' = \begin{bmatrix} 0 & R^{1/2} \end{bmatrix}$, we have
\[
P\hat{Q}' - F'\hat{R}' = \begin{bmatrix} PQ^{1/2} & -F'R^{1/2} \end{bmatrix}
\]
Therefore
\begin{align*}
(P\hat{Q}' - F'\hat{R}')(P\hat{Q}' - F'\hat{R}')' &= PQ^{1/2}Q^{1/2}P + F'R^{1/2}R^{1/2}F \\
&= PQP + F'RF
\end{align*}
Substituting into \eqref{eq:schur-11} yields
\begin{equation}
P \succeq PQP + F'RF
\label{eq:schur-pf}
\end{equation}
Define $M \eqbyd Q^{1/2}PQ^{1/2} \succeq 0$. From \eqref{eq:schur-pf}, left- and right-multiplying by $Q^{1/2}$ gives
\begin{align*}
M &\succeq Q^{1/2}(PQP)Q^{1/2} + Q^{1/2}F'RFQ^{1/2} \\
&\succeq Q^{1/2}(PQP)Q^{1/2} = M^2
\end{align*}
Let $\mu$ be any eigenvalue of $M$ with corresponding eigenvector $v$. Then $Mv = \mu v$ and $M^2v = \mu^2 v$. Since $M \succeq M^2$, the quadratic form yields
\begin{align*}
0 &\leq v'(M - M^2)v = v'Mv - v'M^2v \\
&= \mu \norm{v}^2 - \mu^2 \norm{v}^2 = \mu(1-\mu)\norm{v}^2
\end{align*}
Since $M \succeq 0$, we have $\mu \geq 0$. Therefore $\mu(1-\mu) \geq 0$ implies $0 \leq \mu \leq 1$. Hence $M \preceq I$, which by the definition of $M$ gives
\begin{equation}
0 \preceq P \preceq Q^{-1}
\label{eq:P-bound}
\end{equation}
where the inverse exists by Assumption~\ref{asst4} ($Q \succ 0$). Thus $\norm{P} \leq \normf{Q^{-1}}$ on the feasible set.
From \eqref{eq:schur-pf}, we have $F'RF \preceq P - PQP \preceq P \preceq Q^{-1}$. Therefore
\[
\norm{F'RF} \leq \normf{Q^{-1}}
\]
Since $R \succ 0$ by Assumption~\ref{asst1}, the square root $R^{1/2}$ exists and
\[
\normf{R^{1/2}F}^2 = \normf{(R^{1/2}F)'(R^{1/2}F)} = \norm{F'RF} \leq \normf{Q^{-1}}
\]
Hence
\begin{equation}
\norm{F} \leq \normf{R^{-1/2}} \sqrt{\norm{Q^{-1}}}
\label{eq:F-bound}
\end{equation}

Consider the level set
\begin{align*}
\mathcal{L}_{\beta} \eqbyd \{(P,F,\lambda,\chi) : &S(P,F,\lambda) \succeq 0, \\
&T(P,\chi) \succeq 0, \\
&f(P,F,\lambda,\chi) \leq \beta\}
\end{align*}
From the bounds established above, any $(P,F,\lambda,\chi) \in \mathcal{L}_{\beta}$ satisfies
\[
0 \preceq P \preceq Q^{-1} \quad \norm{F} \leq c \quad 0 \leq \lambda \leq 2\beta \quad 0 \leq \chi \leq 2\beta
\]
where $c \eqbyd \normf{R^{-1/2}} \sqrt{\norm{Q^{-1}}}$. Therefore $\mathcal{L}_{\beta}$ is contained in a compact subset of $\mathbb{S}^n \times \mathbb{R}^{m \times n} \times \mathbb{R}^2$. Since the LMI constraints $S(P,F,\lambda) \succeq 0$ and $T(P,\chi) \succeq 0$ define closed sets (positive semidefinite matrices form a closed cone) and the objective $f$ is continuous, the level set $\mathcal{L}_{\beta}$ is closed and bounded, hence compact.

Assuming the feasible set is nonempty (which is guaranteed by $(A,B)$ stabilizable, as shown in the proof of \cref{prop:lmigen}), the infimum of $f$ over the feasible set is finite (bounded below by $0$). By compactness of level sets and continuity of $f$, the Weierstrass theorem ensures that the infimum is attained at some $(P^*,F^*,\lambda^*,\chi^*)$ in the feasible set. Therefore an optimal solution exists.
\end{proof}

\begin{proof}[Proof of \cref{infl}]

To simplify notation, we reverse the time indexing used in \cref{prop:ndsignal}.
Set
\[
P_0\eqbyd\Pi_N=P_f\qquad P_k\eqbyd\Pi_{N-k}\qquad(k\ge0)
\]
and, for any scalar $\gamma$, the following forward recursion at step $k$
mirrors the backward step at index $N-k$ from \eqref{signalrec1}
\begin{equation}
\begin{split}
P_{k+1}(\gamma) &= Q+A'P_kA
-A'P_k\!\begin{bmatrix}B&G\end{bmatrix} \\
&\quad \begin{bmatrix}
B'P_kB+R & B'P_kG\\
G'P_kB   & G'P_kG-\gamma I
\end{bmatrix}^{-1}
\!\begin{bmatrix}B'\\G'\end{bmatrix}P_kA
\end{split}
\label{reverse}
\end{equation}

\textit{Notation for multipliers:} We adopt the following conventions. The symbol $\gamma$ denotes a generic scalar multiplier (optimization variable) in $P_{k+1}(\gamma)$ and is identified with $\lambda$ under the backward indexing. Define the feasibility lower bound (stagewise) as
\begin{gather*}
\gamma_0\eqbyd\norm{G'P_0G}=\norm{G'P_fG}\\
\gamma_{k+1}\eqbyd
\begin{cases}
\displaystyle\min_{\gamma\ge\gamma_k}\bigl\{\gamma\,:\,\gamma=\norm{G'P_k(\gamma)G}\bigr\}
\quad &\text{if } \norm{G'P_k(\gamma_k)G}>\gamma_k\\[4pt]
\gamma_k \quad&\text{if } \norm{G'P_k(\gamma_k)G}\le\gamma_k
\end{cases}
\end{gather*}
so $\gamma_{k+1}\ge\gamma_k$. For fixed $x_0$ and $\alpha$, the stage-$k$ objective is
\[
L_k(\gamma)\eqbyd\frac{1}{2}\left(\frac{x_0}{\sqrt{\alpha}}\right)'\!P_k(\gamma)\left(\frac{x_0}{\sqrt{\alpha}}\right)+\frac{\gamma}{2}\qquad\gamma\in[\gamma_{k-1},\infty)
\]
Let $\gamma^\star_k$ denote the interior stationary point of $L_k(\cdot)$, satisfying $\frac{dL_k}{d\gamma}(\gamma^\star_k)=0$ if it exists (it need not lie in $[\gamma_{k-1},\infty)$). The stage-$k$ optimizer is
\[
\gamma^*(k)\eqbyd\arg\min_{\gamma\ge\gamma_{k-1}}L_k(\gamma)=\max\{\gamma_{k-1},\gamma^\star_k\}
\]
When the stage index is clear from context, $\gamma^*$ is shorthand for $\gamma^*(k)$. Define the limits
\[
\gamma_\infty\eqbyd\lim_{k\to\infty}\gamma_k\qquad
\gamma_\infty^*\eqbyd\lim_{k\to\infty}\gamma^*(k)\qquad
P_\infty(\gamma)\eqbyd\lim_{k\to\infty}P_k(\gamma)
\]
For horizon $N$, the backward time correspondence is $\gamma=\lambda$, $P_k(\gamma)=\Pi_{N-k}(\lambda)$, hence $\gamma^*(N)=\lambda^*(N)$ and $P_N(\gamma^*(N))=\Pi_0(\lambda^*(N))$. The steady-state solution is identified as $\overline{\lambda}\eqbyd\gamma_\infty^*$ and $\overline{\Pi}\eqbyd P_\infty(\gamma_\infty^*)$.
From \cref{prop:ndsignal}, $\norm{G'P_k(\gamma)G}$ is continuous and nonincreasing in $\gamma$ on $[\gamma_k,\infty)$, and recursion \eqref{reverse} is monotonic nondecreasing in $k$ and monotonic nonincreasing in $\gamma$ on $[\gamma_k,\infty)$.



The proof is structured as follow. First, we establish a uniform upper bound on $P_k(\gamma^*)$ and $\gamma^*$ in $k$. Then we prove the monotonicity and limit as $k\rightarrow\infty$ for both $\gamma^*(k)$, which is the optimal solution $\gamma^*$ at stage $k$, and $P_k(\gamma^*)$. Finally, we establish convergence of $(\gamma^*(k),P_k(\gamma^*(k)))$ as $k\rightarrow \infty$ to a minimal steady-state solution $(\overline{\lambda},\overline{\Pi})$.

\textbf{Upper bound on $P_k(\gamma^*)$ and $\gamma^*$.} \newline Since $(A,B)$ is stabilizable, there exists $K$ such that $A + BK$ is Schur stable. Let $\max\norm{\text{eig}(A+BK)}<\rho <1$ and define the possibly suboptimal control sequence $\tilde{\useq}$, where $u_i = Kx_i$ for $i \in \{0,1,\ldots,k-1\}$.
The closed-loop system under this control satisfies
\begin{equation*}
x_{i+1} = (A+BK)x_i + Gw_i
\end{equation*}
\begin{equation*}
x_i = (A+BK)^i x_0 + \sum_{j=0}^{i-1} (A+BK)^{i-1-j} Gw_j
\end{equation*}
for any disturbance sequence satisfying $\sum_{i=0}^{k-1} |w_i|^2 = \alpha$.
Since $\norm{(A+BK)^j} \leq c\rho^j$ for some $c > 0$, and using Cauchy-Schwarz inequality we obtain
\begin{align*}
\norm{x_i} &\leq c\rho^i \norm{x_0} + c\norm{G} \sum_{j=0}^{i-1} \rho^{i-1-j} \norm{w_j} \\
&\leq c\rho^i \norm{x_0} + c\norm{G} \left(\sum_{j=0}^{i-1} \rho^{2(i-1-j)}\right)^{1/2} \left(\sum_{j=0}^{i-1} \norm{w_j}^2\right)^{1/2} \\
&\leq c\rho^i \norm{x_0} + \frac{c\norm{G}\sqrt{\alpha}}{(1-\rho^2)^{1/2}}
\end{align*}
Define $\bar{Q} \eqbyd Q+K'RK$. The stage cost under this control satisfies
\begin{equation*}
\ell(x_i, u_i) = (1/2)x_i'Qx_i + (1/2)u_i'Ru_i = (1/2)x_i'\bar{Q}x_i \leq (1/2)\norm{\bar{Q}}\norm{x_i}^2
\end{equation*}
Express $x_i = y_i + z_i$ where $y_i \eqbyd (A+BK)^i x_0$ and $z_i \eqbyd \sum_{j=0}^{i-1} (A+BK)^{i-1-j} Gw_j$. Using $(a+b)^2 \leq 2a^2 + 2b^2$ we obtain
\begin{equation*}
\sum_{i=0}^{k-1} \norm{x_i}^2 \leq 2\sum_{i=0}^{k-1} \norm{y_i}^2 + 2\sum_{i=0}^{k-1} \norm{z_i}^2
\end{equation*}
For the homogeneous part, $\norm{y_i} \leq c\rho^i \norm{x_0}$ gives
\begin{equation*}
\sum_{i=0}^{k-1} \norm{y_i}^2 \leq c^2 \sum_{i=0}^{k-1} \rho^{2i} \norm{x_0}^2 = c^2 \frac{1-\rho^{2k}}{1-\rho^2} \norm{x_0}^2
\end{equation*}
For the forced part, define the sequence $\mathbf{a} \eqbyd (a_0, a_1, \dots)$ with $a_t \eqbyd c\norm{G}\rho^t$ for $t \geq 0$ and let $\mathbf{z} \eqbyd (z_0, z_1, \dots, z_{k-1})$. Since $\norm{(A+BK)^t} \leq c\rho^t$, we have $\norm{z_i} \leq \sum_{j=0}^{i-1} a_{i-1-j} \norm{w_j}$, which is a discrete convolution. Young's convolution inequality with truncated norms yields
\begin{equation*}
\smax{\mathbf{z}} \leq  \smax{\mathbf{a}}_1 \smax{\wseq}
\end{equation*}
where $ \smax{\mathbf{a}}_1=\sum_{t=0}^{k-1} a_t = c\norm{G} \sum_{t=0}^{k-1} \rho^t = c\norm{G}(1-\rho^k)/(1-\rho)$ and $\smax{\wseq}^2 = \sum_{j=0}^{k-1} \norm{w_j}^2 = \alpha$. Therefore
\begin{equation*}
\sum_{i=0}^{k-1} \norm{z_i}^2 = \smax{\mathbf{z}}^2 \leq \frac{c^2\norm{G}^2(1-\rho^k)^2}{(1-\rho)^2} \alpha
\end{equation*}
Combining both parts gives
\begin{align*}
V(x_0, \tilde{\useq}, \wseq)
      &= \sum_{i=0}^{k-1} \ell(x_i, u_i)
      \leq \frac{\norm{\bar{Q}}}{2} \sum_{i=0}^{k-1} \norm{x_i}^2 \\
      &\leq \frac{\norm{\bar{Q}}}{2} \left(2c^2 \frac{1-\rho^{2k}}{1-\rho^2} \norm{x_0}^2 + 2\frac{c^2\norm{G}^2(1-\rho^k)^2}{(1-\rho)^2} \alpha\right) \\
      &= \frac{\norm{\bar{Q}}c^2}{1-\rho^2} (1-\rho^{2k}) \norm{x_0}^2 + \frac{c^2\norm{G}^2\norm{\bar{Q}}}{(1-\rho)^2} (1-\rho^k)^2 \alpha
\end{align*}
Letting $k\to\infty$ and using $(1-\rho^{2k})\to 1$ and $(1-\rho^k)^2 \to 1$
\begin{align*}
V(x_0,\tilde{\useq},\wseq)
      &\le
         \frac{\norm{\bar{Q}}\,c^{2}}
              {1-\rho^{2}}\,\norm{x_0}^{2}+
         \frac{c^{2}\,\norm{G}^{2}\,\norm{\bar{Q}}}
              {(1-\rho)^2}\;\alpha
\end{align*}
Define
\begin{align*}
c_1 \eqbyd \frac{\norm{\bar{Q}}c^2}{1-\rho^2} \qquad c_2 \eqbyd \frac{c^{2}\,\norm{G}^{2}\,\norm{\bar{Q}}}
              {(1-\rho)^2}\;\alpha
\end{align*}
Since the optimal control achieves lower cost than the suboptimal control
\begin{equation*}
V^*(x_0) \leq \frac{V(x_0, \tilde{\useq}, \wseq)}{\alpha} \leq \frac{c_1}{\alpha}\norm{x_0}^2 + \frac{c_2}{\alpha}
\end{equation*}
From the optimality conditions in \cref{prop:ndsignal} the optimal cost is
\begin{equation*}
V^*(x_0) = (1/2)(\frac{x_0}{\sqrt{\alpha}})' P_k(\gamma^*) (\frac{x_0}{\sqrt{\alpha}}) + \gamma^*/2
\end{equation*}
where $P_k(\gamma)$ is generated by the forward recursion \eqref{reverse} and initial penalty $P_0 = P_f$.

\textbf{Limit of $\gamma^*(k)$.} \newline
Fix $x_0$. From \cref{prop:ndsignal}, $\gamma_k$ is monotonic nondecreasing in $k$, $\gamma_{k+1} \geq \gamma_k$.
At stage $k$, we minimize $L_k(\gamma) = (1/2)(x_0 / \sqrt{\alpha})' P_k(\gamma) (x_0 / \sqrt{\alpha}) + \gamma/2$ over $\gamma \in [\gamma_{k-1}, \infty)$. From Proposition 8 in in Mannini and Rawlings~\citep{mannini:rawlings:2026a}, the derivative of $L_k(\gamma)$ is nondecreasing on $[\gamma_{k-1},\infty)$ and given by
\[
\frac{dL_k}{d\gamma}(\gamma) = \frac{1}{2} - \frac{1}{2}\frac{\normf{\tilde{J}_k(\gamma)x_0}^2}{\alpha}
\]
where
\[
\tilde{J}_k(\gamma) \eqbyd \begin{bmatrix} J_0 \\ J_1 \Phi_{1} \\ \vdots \\ J_{k-1} \Phi_{k-1} \end{bmatrix}
\]
Since $\tilde{J}_{k+1}(\gamma)$ is obtained by stacking one additional block over $\tilde{J}_{k}(\gamma)$, we have $\tilde{J}_{k+1}'(\gamma)\tilde{J}_{k+1}(\gamma) \succeq \tilde{J}_{k}'(\gamma)\tilde{J}_{k}(\gamma)$, which implies
\[
\frac{dL_{k+1}}{d\gamma}(\gamma) \leq \frac{dL_{k}}{d\gamma}(\gamma) \quad \text{for all } \gamma \in [\gamma_{k}, \infty)
\]
Let $\gamma^\star_k$ be the unique interior root of $dL_k/d\gamma=0$, if it exists (it need not lie in $[\gamma_{k-1},\infty)$). The minimizer over the feasible set is
\[
\gamma^*(k) = \arg\min_{\gamma \geq \gamma_{k-1}} L_k(\gamma) = \max\{\gamma_{k-1}, \gamma^\star_k\}
\]
We then consider two cases.

\emph{Case A:} $\gamma_k^\star \le \gamma_k$. \quad
$\gamma^*(k)=\max\{\gamma_{k-1},\gamma_k^\star\}\le \gamma_k \le \gamma^*(k+1)$ (since stage $k+1$ minimizes over $[\gamma_k,\infty)$).

\emph{Case B:} $\gamma_k^\star \ge \gamma_k$. \quad
On $[\gamma_k,\infty)$, $dL_{k+1}/d\gamma\le dL_k/d\gamma$ and both are nondecreasing in $\gamma$, thus $\gamma_{k+1}^\star \ge \gamma_k^\star$. Hence
\begin{align*}
\gamma^*(k+1) &= \max\{\gamma_k,\gamma_{k+1}^\star\} \\
&\ge \max\{\gamma_k,\gamma_k^\star\} \\
&\ge \max\{\gamma_{k-1},\gamma_k^\star\} = \gamma^*(k)
\end{align*}
Therefore, from cases A and B we have that $\gamma^*(k)$ is monotonic nondecreasing in $k$.

Consider the previously obtained bound $V^*(x_0) \leq (c_1/\alpha)\norm{x_0}^2 + c_2/\alpha$. Since $V^*(x_0) = L_k(\gamma^*(k)) \geq \gamma^*(k)/2$, we obtain
\[
\gamma^*(k) \leq \frac{2c_1}{\alpha}\norm{x_0}^2 + \frac{2c_2}{\alpha} =: C(x_0)
\]
for each fixed $x_0$ and all $k$. Therefore, for each fixed $x_0$, $\gamma^*(k)$ is monotone nondecreasing and bounded, hence converges to $\gamma^*_{\infty}$ as $k \to \infty$
\[
\lim_{k\to\infty}\gamma^*(k) \;=\; \gamma^*_{\infty}
\]

\textbf{Limit of $P_k(\gamma^*(k))$.} \newline
We first establish uniform boundedness of $P_k(\gamma)$ for fixed $\gamma \geq \gamma_{k-1}$. For any fixed $\gamma \geq \gamma_{k-1}$, the Lagrangian value function satisfies
\begin{align*}
L_k(\gamma) &= \frac{1}{\alpha}\min_{u_0}\max_{w_0} \cdots \min_{u_{k-1}}\max_{w_{k-1}}\left( V(x_0,\useq,\wseq) - \frac{\gamma}{2}(|\wseq|^2-\alpha)\right) \\
&= \frac{1}{2}\left(\frac{x_0}{\sqrt{\alpha}}\right)'P_k(\gamma)\left(\frac{x_0}{\sqrt{\alpha}}\right) + \frac{\gamma}{2}
\end{align*}
Consider any unit vector $v$ with $\norm{v}=1$ and set $x_0=\sqrt{\alpha}\,v$. Then $\norm{x_0}^2=\alpha$ and $(x_0/\sqrt{\alpha})=v$, so we have
\[
L_k(\gamma) = \frac{1}{2}\,v'P_k(\gamma)v + \frac{\gamma}{2}
\]
Using the suboptimal stabilizing control $\tilde{\useq}$ with gain $K$ such that $A+BK$ is Schur stable, we obtain the upper bound
\begin{align*}
L_k(\gamma) &= \frac{1}{\alpha}\min_{u_0}\max_{w_0} \cdots \min_{u_{k-1}}\max_{w_{k-1}}\left( V(x_0,\useq,\wseq) - \frac{\gamma}{2}(|\wseq|^2-\alpha)\right) \\
&\leq \frac{1}{\alpha}\max_{w_0} \cdots \max_{w_{k-1}}\left( V(x_0,\tilde{\useq},\wseq) - \frac{\gamma}{2}(|\wseq|^2-\alpha)\right)
\end{align*}
Since $-(\gamma/2)|\wseq|^2 \leq 0$ for $\gamma \geq 0$, we have $-(\gamma/2)(|\wseq|^2-\alpha) \leq (\gamma/2)\alpha$, hence we obtain
\[
L_k(\gamma) \leq \frac{1}{\alpha}\max_{w_0} \cdots \max_{w_{k-1}} V(x_0,\tilde{\useq},\wseq) + \frac{\gamma}{2}
\]
From the stabilizing $K$ bound established earlier $$(1/\alpha)\max_{w_0} \cdots \max_{w_{k-1}} V(x_0,\tilde{\useq},\wseq) \leq (c_1/\alpha)\norm{x_0}^2 + c_2/\alpha$$ Substituting $\norm{x_0}^2 = \alpha$ yields
\[
\frac{1}{\alpha}\max_{w_0} \cdots \max_{w_{k-1}} V(x_0,\tilde{\useq},\wseq) \leq c_1 + \frac{c_2}{\alpha}
\]
Combining these inequalities gives
\[
\frac{1}{2}\,v'P_k(\gamma)v + \frac{\gamma}{2} = L_k(\gamma) \leq c_1 + \frac{c_2}{\alpha} + \frac{\gamma}{2}
\]
Canceling $\gamma/2$ on both sides yields $v'P_k(\gamma)v \leq 2c_1 + 2c_2/\alpha$. Since this holds for every unit vector $v$, we obtain $\norm{P_k(\gamma)} \leq 2c_1 + 2c_2/\alpha =: C_P$ for all $k$ and all $\gamma \geq \gamma_{k-1}$.

Since $\gamma^*(k) \geq \gamma_{k-1}$, monotonicity in $\gamma$ gives $P_k(\gamma^*(k)) \preceq P_k(\gamma_{k-1})$. Thus, we have
\[
\norm{P_k(\gamma^*(k))} \leq \norm{P_k(\gamma_{k-1})} \leq C_P
\]
Similarly, since $\gamma^*(k) \leq \gamma_\infty^*$, monotonicity in $\gamma$ gives $P_k(\gamma_\infty^*) \preceq P_k(\gamma^*(k))$. Thus, we have
\[
\norm{P_k(\gamma_\infty^*)} \leq \norm{P_k(\gamma^*(k))} \leq C_P
\]
hence from monotonicity in $k$ of $P_k(\gamma)$, we obtain $P_k(\gamma_\infty^*) \nearrow P_\infty(\gamma_\infty^*)$.

We now establish convergence $P_k(\gamma^*(k)) \to P_\infty(\gamma_\infty^*)$. Fix $x_0$ and $\gamma^- \in (\gamma_\infty, \gamma_\infty^*)$. Since $\gamma^*(k) \to \gamma_\infty^*$ from below, there exists $K$ such that for all $k \geq K$, the following inequalities hold
\[
\gamma^- \leq \gamma^*(k) \leq \gamma_\infty^*
\]
Since $\gamma \mapsto P_k(\gamma)$ is nonincreasing, for all $k \geq K$ we have the Loewner ordering
\[
P_k(\gamma_\infty^*) \preceq P_k(\gamma^*(k)) \preceq P_k(\gamma^-)
\]
For any $x \in \bbR^n$, this yields
\begin{equation}
x'P_k(\gamma_\infty^*)x \;\leq\; x'P_k(\gamma^*(k))x \;\leq\; x'P_k(\gamma^-)x
\label{eq:sandwich-quad}
\end{equation}
Since $\gamma^-, \gamma_\infty^* \geq \gamma_\infty$, the limits $P_\infty(\gamma^-)$ and $P_\infty(\gamma_\infty^*)$ are well-defined via monotone convergence $P_k(\gamma) \nearrow P_\infty(\gamma)$. Taking $k \to \infty$ in \eqref{eq:sandwich-quad} gives
\begin{align*}
x'P_\infty(\gamma_\infty^*)x &\leq \liminf_{k \to \infty} x'P_k(\gamma^*(k))x \\
&\leq \limsup_{k \to \infty} x'P_k(\gamma^*(k))x \\
&\leq x'P_\infty(\gamma^-)x
\end{align*}
We now take $\gamma^- \uparrow \gamma_\infty^*$. 
Therefore, we have
\[
\lim_{\gamma^- \uparrow \gamma_\infty^*} x'P_\infty(\gamma^-)x = x'P_\infty(\gamma_\infty^*)x
\]
Applying this limit to the bounds above yields, for every $x \in \bbR^n$, we obtain
\[
\liminf_{k \to \infty} x'P_k(\gamma^*(k))x = \limsup_{k \to \infty} x'P_k(\gamma^*(k))x = x'P_\infty(\gamma_\infty^*)x
\]
and hence
\begin{equation}
\lim_{k \to \infty} x'P_k(\gamma^*(k))x = x'P_\infty(\gamma_\infty^*)x \quad \text{for all } x \in \bbR^n
\label{eq:quad-form-conv}
\end{equation}

To establish matrix convergence from quadratic form convergence, define $Z_k \eqbyd P_k(\gamma^*(k)) - P_\infty(\gamma_\infty^*)$, so each $Z_k$ is symmetric and \eqref{eq:quad-form-conv} gives $x'Z_k x \to 0$ for all $x \in \bbR^n$. The polarization identity for symmetric matrices states that for any $x, y \in \bbR^n$ we have
\[
x'Z_k y = \frac{1}{4}\left[(x+y)'Z_k(x+y) - (x-y)'Z_k(x-y)\right]
\]
Since $(x+y)'Z_k(x+y) \to 0$ and $(x-y)'Z_k(x-y) \to 0$ by \eqref{eq:quad-form-conv}, we obtain $x'Z_k y \to 0$ for all $x, y \in \bbR^n$. In particular, for the standard basis vectors $e_i, e_j \in \bbR^n$, we have $(Z_k)_{ij} = e_i'Z_k e_j \to 0$, so every entry of $Z_k$ converges to zero. In finite dimension, entrywise convergence implies convergence in any matrix norm, hence $\norm{Z_k} \to 0$ and we conclude for each fixed $x_0$
\[
\lim_{k\to\infty} P_k(\gamma^*(k)) = P_\infty(\gamma_\infty^*)
\]

\textbf{Steady-state convergence.} \newline
Given that $\lim_{k\to\infty}\gamma^*(k)
      \;=\;
\gamma^*_{\infty}$ and $\lim_{k\to\infty}P_k(\gamma^*(k)) \;=\; P_\infty(\gamma_\infty^*)$, take the limit of recursion \eqref{reverse}
\begin{equation}
\begin{split}
P_\infty(\gamma^*_{\infty}) &= Q+A'P_\infty A- A' P_\infty \begin{bmatrix} B & G \end{bmatrix} \\
&\quad \begin{bmatrix} B'P_\infty B + R & B'P_\infty G \\ (B'P_\infty G)' & G'P_\infty G - \gamma^*_{\infty} I  \end{bmatrix}^{-1}
    \begin{bmatrix} B' \\ G'\end{bmatrix}P_\infty A
\end{split}
\end{equation}
By construction, $\gamma^*(k) \geq \norm{G'P_{k}(\gamma^*(k))G}$ for every $k$, since this is the feasibility constraint. Taking limits and using continuity of $P_k(\cdot)$ in both arguments we obtain
\[
\gamma_\infty^* = \lim_{k\to\infty}\gamma^*(k) \geq \lim_{k\to\infty}\norm{G'P_{k}(\gamma^*(k))G} = \norm{G'P_\infty(\gamma_\infty^*) G}
\]
where the equality holds by continuity of the operator norm. Furthermore, passing to the limit in the Riccati recursion establishes that $(\gamma_\infty^*, P_\infty(\gamma_\infty^*))$ satisfies $g(\gamma_\infty^*, P_\infty(\gamma_\infty^*)) = 0$. 
Therefore, since $\gamma_\infty^* \geq \norm{G'P_\infty(\gamma_\infty^*) G}$ and $g(\gamma_\infty^*, P_\infty(\gamma_\infty^*)) = 0$, $(\gamma_\infty^*, P_\infty(\gamma_\infty^*))$ is a solution to the steady-state problem \eqref{eq:ss-problem}.

We now establish that $(\gamma_\infty^*, P_\infty(\gamma_\infty^*))$ is a minimal solution to the steady-state problem \eqref{eq:ss-problem}. 
Applying \cref{receq} to $g(\overline{\lambda},\overline{\Pi})=0$ we obtain that
\begin{align*}
&Q+A'\overline\Pi A -A' \overline\Pi \begin{bmatrix} B & G \end{bmatrix} \\
&\quad \begin{bmatrix} B'\overline\Pi B + R & B'\overline\Pi G \\ (B'\overline\Pi G)' & G'\overline\Pi G - \overline\lambda I  \end{bmatrix}^{-1}
\begin{bmatrix} B' \\ G'\end{bmatrix}\overline\Pi A - \overline\Pi=0
\end{align*}
which can be rewritten as
\[
\bar{Q} + \bar{A}'\overline\Pi \bar{A} - \bar{A}' \overline\Pi G\,(G'\overline\Pi G-\overline\lambda I)^{\dagger}G'\overline\Pi \bar{A}-\overline\Pi =0,
\]
where $\bar{A} = A+B\overline{K}$, $\bar{Q} = Q + \overline{K}'R\overline{K}$, and $\overline{K}$ satisfies
\[
\begin{bmatrix} B'\overline\Pi B + R & B'\overline\Pi G \\ (B'\overline\Pi G)' & G'\overline\Pi G - \overline\lambda I  \end{bmatrix}\!\begin{bmatrix}
\overline{K} \\ \overline{J}
\end{bmatrix} =\begin{bmatrix} B' \\ G'\end{bmatrix}\overline\Pi A.
\]
Note that since $M_k(\lambda)$ is invertible on $[\lambda_k,\infty)$ under Assumption 2, the Moore–Penrose pseudoinverse in \cref{receq} reduces to the standard inverse.
Since $\overline{\Pi}\succeq0$ and $G'\overline\Pi G-\overline\lambda I\preceq 0$, we have 
$-\bar{A}' \overline\Pi G\,(G'\overline\Pi G-\overline\lambda I)^{\dagger}G'\overline\Pi \bar{A}\succeq0$, hence
\[
\overline{\Pi} \;=\; \bar Q + \bar A'\overline\Pi \bar A - \bar{A}' \overline\Pi G\,(G'\overline\Pi G-\overline\lambda I)^{\dagger}G'\overline\Pi \bar{A} \ \succeq\ \bar Q \ \succeq\ Q \ \succ\ 0
\]
Therefore, because $\overline{\Pi} \succeq P_0$ and the monotone nondecreasing recursion starts at $P_0$, the limit of the recursion gives a minimal positive semidefinite solution to the steady-state problem. Thus, $\overline{\Pi}=P_\infty(\gamma^*_\infty)$, $\overline{\lambda}=\gamma^*_{\infty}$, where $(\overline{\lambda},\overline{\Pi})$ is a minimal solution to the steady-state problem.

    Finally, we establish the correspondence between the forward indexing used in the proof and the backward indexing used elsewhere. For a fixed horizon $N$, identify the scalar multipliers via $\gamma=\lambda$ and map
\[
P_k(\gamma)=\Pi_{N-k}(\lambda)\qquad k=0,1,\dots,N\quad\text{with } P_0=\Pi_N=P_f
\]
In particular, $P_N(\gamma)=\Pi_{0}(\lambda)$ is the stage-0 value matrix. Hence, since $\gamma^*(N)=\lambda^*(N)$ under this identification we have
\[
\lim_{N\to\infty} P_N(\gamma^*(N))=P_\infty(\gamma_\infty^*)\quad\Longleftrightarrow\quad
\lim_{N\to\infty} \Pi_{0}(\lambda^*(N))=\overline{\Pi}
\]
\[
\lim_{N\to\infty}\gamma^*(N)=\gamma_\infty^*\quad\Longleftrightarrow\quad
\lim_{N\to\infty}\lambda^*(N)=\overline{\lambda}
\]
as stated in the proposition.
\end{proof}

\begin{proof}[Proof of \cref{infld}]
To simplify notation, where appropriate, we reverse the time indexing exactly as in \cref{infl}.
\begin{enumerate}
    \item Consider any degenerate system. The result 
\[
  \lim_{N \to \infty} \lambda^*(N) = \lambda^*_\infty \quad \text{and} \quad \lim_{N \to \infty} \Pi_{0}(\lambda^*(N)) = \Pi_\infty(\lambda^*_\infty)
\]
is proved exactly as in \cref{infl}.
\item Consider $x_0\in\mathcal{X}^\infty_{L}(\alpha)$. By definition of $\mathcal{X}^\infty_{L}(\alpha)$, the admissibility constraint is active along the finite horizon recursion. In forward time this means there exists $\hat{k}$ such that $\gamma^*(k)\!=\!\norm{G'P_{k}(\gamma^*(k))G}$ for all $k \geq \hat{k}$.
But for a degenerate steady-state solution $(\overline{\lambda},\overline{\Pi})$, we have
$\overline{\lambda}>\norm{G'\overline{\Pi}G}$ for all $x$, so $(\gamma_\infty^*,P_\infty(\gamma_\infty^*))\neq(\overline{\lambda},\overline{\Pi})$.

Moreover, by monotonicity of the recursion \eqref{reverse} we have
\[
P_{k+1}(\gamma)-P_k(\gamma)\;=\;g\!\big(\gamma,P_k(\gamma)\big)\ \succeq\ 0.
\]
Evaluating at the optimizer $\gamma=\gamma^*(k)$ and letting $k\to\infty$, with
$\gamma^*(k)\!\to\!\lambda^*_\infty$ and $P_k(\gamma^*(k))\!\to\!\Pi_\infty(\lambda^*_\infty)$, continuity yields
$g(\lambda^*_\infty,\Pi_\infty(\lambda^*_\infty))\succeq 0$.
If $g(\lambda^*_\infty,\Pi_\infty(\lambda^*_\infty))=0$ held, then, together with the boundary identity
$\gamma^*(k)\!=\!\norm{G'P_{k}(\gamma^*(k))G}$ for all $k$, this would produce a steady-state fixed point at equality,
contradicting degeneracy, which requires the strict slack
$\overline{\lambda}>\norm{G'\overline{\Pi}G}$ for any steady-state solution.
Therefore $g(\lambda^*_\infty,\Pi_\infty)\succ 0$.

\item Consider $x_0\in\mathcal{X}^\infty_{NL}(\alpha)$. Then there exists $\hat{k}$ such that for all $k\ge \hat{k}$ the admissibility condition is strict at the optimizer: 
$\gamma^*(k)>\norm{G'P_{k-1}(\gamma^*(k))G}$. Passing to the limit, as in \cref{infl}, and using continuity, the Riccati recursion converges and satisfies
\[
g(\gamma_\infty^*,P_\infty(\gamma_\infty^*))=0\qquad 
\gamma_\infty^*\ >\ \norm{G'P_\infty(\gamma_\infty^*)G}
\]
By the same arguments in \cref{infl}, $(\gamma_\infty^*,P_\infty(\gamma_\infty^*))$ is a minimal steady-state solution. Denoting this pair by $(\overline{\lambda},\overline{\Pi})$ gives the claim for $x_0\in\mathcal{X}^\infty_{NL}(\alpha)$.
\end{enumerate}
\end{proof}

\end{document}